\documentclass[12pt,a4paper]{article}
\usepackage[utf8]{inputenc}
\usepackage[left=2cm,right=2cm,top=2cm,bottom=2cm]{geometry}

\usepackage{adjustbox}
\usepackage{enumitem}
\usepackage{chngcntr}
\usepackage[most]{tcolorbox}

\usepackage{graphicx}
\usepackage{multirow}
\usepackage{amsmath,amssymb,amsfonts}
\usepackage{amsthm}
\usepackage{mathrsfs}
\usepackage[title]{appendix}
\usepackage{xcolor}
\usepackage{textcomp}
\usepackage{manyfoot}
\usepackage{booktabs}
\usepackage{algorithmicx}
\usepackage{listings}

\usepackage{import} 
\usepackage{algorithm}
\usepackage{algpseudocode}
\usepackage{float}
\usepackage[section]{placeins}
\usepackage[compatibility=false]{caption}
\usepackage{subcaption}

\algnewcommand\algorithmicparameters{\textbf{Set:}}
\algnewcommand\SET{\Statex \algorithmicparameters\ } 
\algrenewcommand\algorithmicreturn{\textbf{return}}   
\newtheorem{proposition}{Proposition}

\makeatletter
\def\fps@figure{!htbp}
\def\fps@table{!htbp}
\makeatother
\graphicspath{ {./0_Figures/} }
\usepackage{authblk}
\usepackage[numbers, square]{natbib}
\usepackage{hyperref}

\title{ndscape: a structural map of the nested dichotomy space}

\author[1]{Maxwell Dix-Matthews\thanks{maxwell.dix-matthews@research.uwa.edu.au}}
\author[2]{Melinda Hodkiewicz}
\author[1]{Adriano Polpo}

\affil[1]{School of Physics, Mathematics and Computing, UWA, Australia}
\affil[2]{Department of Mechanical \& Chemical Engineering, UWA, Australia}

\date{}

\newcommand{\nd}[2]{\{(#1)\textnormal{ vs }(#2)\}}
\begin{document}

\maketitle

\begin{abstract}
Nested dichotomies (NDs) reduce multiclass classification to a tree of binary classifiers. Tree structure can change accuracy, bias, and variance, but the full space of structures has not been mapped in aggregate. We introduce ndscape, a two-dimensional mapping of ND structures based on depth-weighted structural tokens. We use this embedding to conduct what is, to our knowledge, the first bias-variance analysis across the ND structure space. In simulations with 6 to 10 classes, NDs with similar tree structures exhibit similar squared bias, variance, and squared probability loss, regardless of whether the binary classifier is logistic regression or a decision tree. Variance, however, follows mean root-to-leaf depth when logistic regression is used at each node, so flatter, more balanced NDs are more stable, whereas with decision-tree node models, variance is not associated with depth and differs little across ND structures. On six datasets with 6 to 19 classes, held-out accuracy, log loss, and prediction variance remain spatially organised on the same topology-only map. None of the six standard construction heuristics tested selects a tree from the best-performing cluster on any dataset, although some select high-performing individual trees. The ndscape embedding turns ND selection from one opaque construction step into a readable comparison of class groupings, exposing the upper-level splits that distinguish high-performing and stable regions.\\

\noindent Keywords: Multiclass classification, Nested dichotomies, Hierarchical reduction, Structural similarity, Embedding
\end{abstract}

\section{Introduction}

Multiclass classification is often reduced to a sequence of binary problems. Nested dichotomies (NDs) do this with a tree of binary classifiers \citep{FrankEibe2004Eond}. Compared to alternative reductions (such as one-vs-one and one-vs-all), NDs produce class probabilities by construction when the binary models are probabilistic \citep{FrankEibe2004Eond}, give an interpretable hierarchy, and can support counterfactual reasoning about predictions \citep{FdezSanchez2021}.

For instance, Figure~\ref{fig:single_nd} shows a possible ND structure for a 4-class problem. Each internal node partitions the classes into two disjoint groups and fits a binary classifier to separate them. The probability of a class is the product of the conditional probabilities along the root-to-leaf path. Considering the example in Figure~\ref{fig:single_nd}, $Y \in \{1, 2, 3, 4\}$ is the target variable. Also, consider $X$ as the design matrix. The binary splits are $\nd{1,2}{3,4}$, $\nd{1}{2}$, and $\nd{3}{4}$, that are represented by the binary variables $Z_{\nd{1,2}{3,4}} \in \{(1,2), (3,4)\}$, $Z_{\nd{1}{2}} \in \{1,2\}$, and $Z_{\nd{3}{4}} \in \{3,4\}$, respectively. Then, using the conditional probabilities, 
\begin{equation*}
\begin{aligned}
\Pr(Y = 1 \mid X)
&= \Pr\!\left(Z_{\nd{1,2}{3,4}} = (1,2) \mid X\right)\\
&\quad\times \Pr\!\left(Z_{\nd{1}{2}} = 1 \mid Z_{\nd{1,2}{3,4}} = (1,2),\, X\right).
\end{aligned}
\end{equation*}

\begin{figure}
    \centering
    \includegraphics[width=0.7\textwidth]{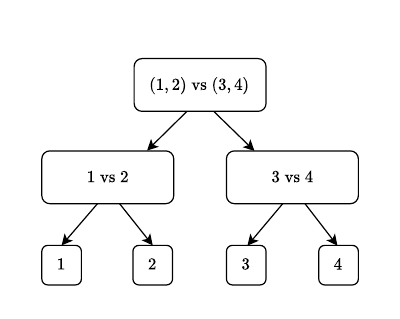}
    \caption{Example of a nested dichotomy structure for a dataset with four classes \{1, 2, 3, 4\}}
    \label{fig:single_nd}
\end{figure}

For four classes, the full ND space contains 15 structures. The number of possible ND structures grows quickly with the number of classes. For $M$ classes, there are $(2M-3)!!$ rooted binary trees, giving 945 structures for 6 classes, 10{,}395 for 7 classes and so on \citep{FrankEibe2004Eond}. Moreover, each structure can produce different predictions for the same data. That is, each structure is a different model, and finding the best structure (or best model) is a key challenge in ND modelling.

Existing work has addressed structural variability in two ways. One approach averages over many structures through ensembling \citep{FrankEibe2004Eond, rodriguez2010forests, Wever2018, evolutionary_ensembling2023}, and another searches for a single high-performing structure \citep{Dong2005, duarte2012nested, Leathart2016, Yang2017, Melnikov2018}. Neither approach provides insights into how structures relate to one another, nor why some structures perform similarly and others very differently. A practitioner who chooses one ND has no account of where their choice sits among the alternatives, or which class groupings are driving performance. To our knowledge, no prior work has analysed how the ND structure shapes bias, variance, and probability loss across the structure space.

We address this gap with ndscape, which organises the structures themselves. Before any model is fitted, ndscape represents each ND as a vector of depth-weighted structural tokens, measures the similarity between two structures by the cosine of the angle between their token vectors, and projects the structures into two dimensions using metric multidimensional scaling, so that structurally similar NDs lie close together. Because the map uses only topology, it is fixed for a given class count and any predictive quantity can be displayed on it.

The type of data determines what we display. With a known generating ND, the true class probabilities are available, so we compute squared bias, variance, and squared probability loss for each structure. On real data, these are unavailable, so we overlay held-out accuracy, log loss, and bootstrap prediction variance instead. In both settings, we ask whether structures that lie close on the map perform alike.

Our contributions are:
\begin{itemize}
    \item We show that predictive behaviour across the ND structure space is spatially organised. In simulations with known generating structures for 6 to 10 classes, squared bias, variance, and squared probability loss are each positively autocorrelated on a topology-only embedding.
    \item We introduce ndscape, a depth-weighted, token-based representation of ND structures that builds a two-dimensional embedding of the ND space, released as an open-source package \citep{ndscape}.
    \item We show that bias and variance are organised by different structural features. Squared bias and squared probability loss are lowest among the structures most similar to the data-generating ND.  With logistic regression as the binary model, variance tracks the mean root-to-leaf path length, with balanced NDs being more stable. Replacing the logistic node model with a decision tree keeps the bias and loss organisation but flattens variance, which then tracks bias rather than depth (Appendix~\ref{appendix:dt}).
    \item We present a study on six UCI datasets with 6 to 19 classes that overlays held-out accuracy, log loss, and prediction variance on the embedding. The tested construction heuristics select strong individual NDs, but none reaches the best-performing regions of the structure space.
\end{itemize}

\section{Literature review}\label{sec:lit_rev}

\cite{FrankEibe2004Eond} introduced nested dichotomies. Because the space of structures is large, several heuristics have been constructed to select individual NDs. Random-pair selection (RPND) \citep{Leathart2016} chooses two classes at random at each node, fits a temporary binary classifier to separate them, and assigns each remaining class to the side predicted for most of its instances. Best-of-K (BoK) \citep{Yang2017, Melnikov2018} samples $K$ random ND structures and returns the one with the lowest estimated loss. Biased best-of-K (BBoK) extends BoK with a sampler biased toward balanced interior nodes, meaning the same number of classes on each side of the split \citep{Melnikov2018}. Agglomerative clustering ND (ACND) \citep{Melnikov2018} merges class clusters in a bottom-up manner based on similarity. Class-balanced ND (CBND) \citep{Dong2005} builds the ND top-down, splitting the classes at each node into two groups as close to equal in size as possible. Clustering-based ND (NDC) \citep{duarte2012nested} instead clusters the classes at each node from their feature-space representation and splits on that clustering. These methods differ in how they construct the ND, but all focus on finding the best ND. This leaves open the question of how differences among possible ND structures relate to prediction performance.

Some works discuss averaging over many ND structures rather than selecting one. \cite{FrankEibe2004Eond} proposed an ensemble of randomly sampled NDs as a hedge against a poor structural choice. This was extended to forests of NDs \citep{rodriguez2010forests} and to evolutionary ensembles \citep{Wever2018, evolutionary_ensembling2023}. However, neither the selection heuristics nor the ensembles examine why two structures trained on the same data can perform so differently.

A standard way to examine the difference between the ND structures is the bias-variance decomposition of loss. Bias is the gap between the expected fitted probability and the truth, while variance measures how much the fitted probability changes across training samples \citep{geman1992,kohavi1995study,hastie2009elements}. We use the standard multiclass form of this decomposition \citep{geman1992, domingos2000unified}. This is the first work to decompose the loss of an ND into bias and variance, and to map how the two vary across the space of structures.

Mapping how performance varies across the structure space calls for a way to represent and compare NDs, and the relevant tools come from outside the ND literature. The vector space model of information retrieval represents each document as a vector of weighted term counts and measures the similarity of two documents by the cosine of the angle between their vectors \citep{Salton1975, Salton1988}. Metric multidimensional scaling takes a matrix of pairwise dissimilarities and places the objects in a low-dimensional space whose Euclidean distances reproduce those dissimilarities as closely as possible \citep{Torgerson1952}. Moran's $I$, from spatial statistics, measures spatial autocorrelation, the tendency of values at nearby locations to resemble one another more than distant ones \citep{Moran1950}. We use these tools to compare NDs directly.

\section{Method}

We first present the method for simulated data, where the generating model is known, and every quantity of interest can be computed, allowing the bias-variance behaviour of the structures to be studied directly. We then describe how the method adapts when the generating model is unknown.

The method has three steps:

\begin{description}
\item[First,] we embed each ND structure for a fixed class count using its topology, without any data or fitted models (Section~\ref{sec:embedding}).
\item[Second,] on simulated data with a known generating ND, we train each candidate structure many times and estimate its squared bias, variance, and squared probability loss (Section~\ref{sec:biasvariance}).
\item[Third,] we attach these values to the embedding coordinates and measure whether nearby NDs behave alike using Moran's $I$ and related spatial summaries (Section~\ref{sec:spatial_summaries}).
\end{description}

For real datasets where the true probabilities are unknown, we replace the simulated quantities with empirical held-out estimates and repeat the spatial analysis (Section~\ref{sec:real_data}). The ndscape embedding is built once per class count and is reused across all of these steps.

\subsection{Embedding ND structures}\label{sec:embedding}

We present an ND embedding based on vector-space methods from information retrieval \citep{Salton1975, Salton1988}. The embedding is built in three stages. First, each ND is encoded as a collection of structural tokens. Second, these tokens are converted into normalised count vectors and compared using cosine similarity. Third, the resulting pairwise distances are embedded in two-dimensional Euclidean space. Figure~\ref{fig:nd-embedding} illustrates these stages using the ND from Figure~\ref{fig:single_nd}.

\begin{figure}
    \centering
    \includegraphics[width=\linewidth]{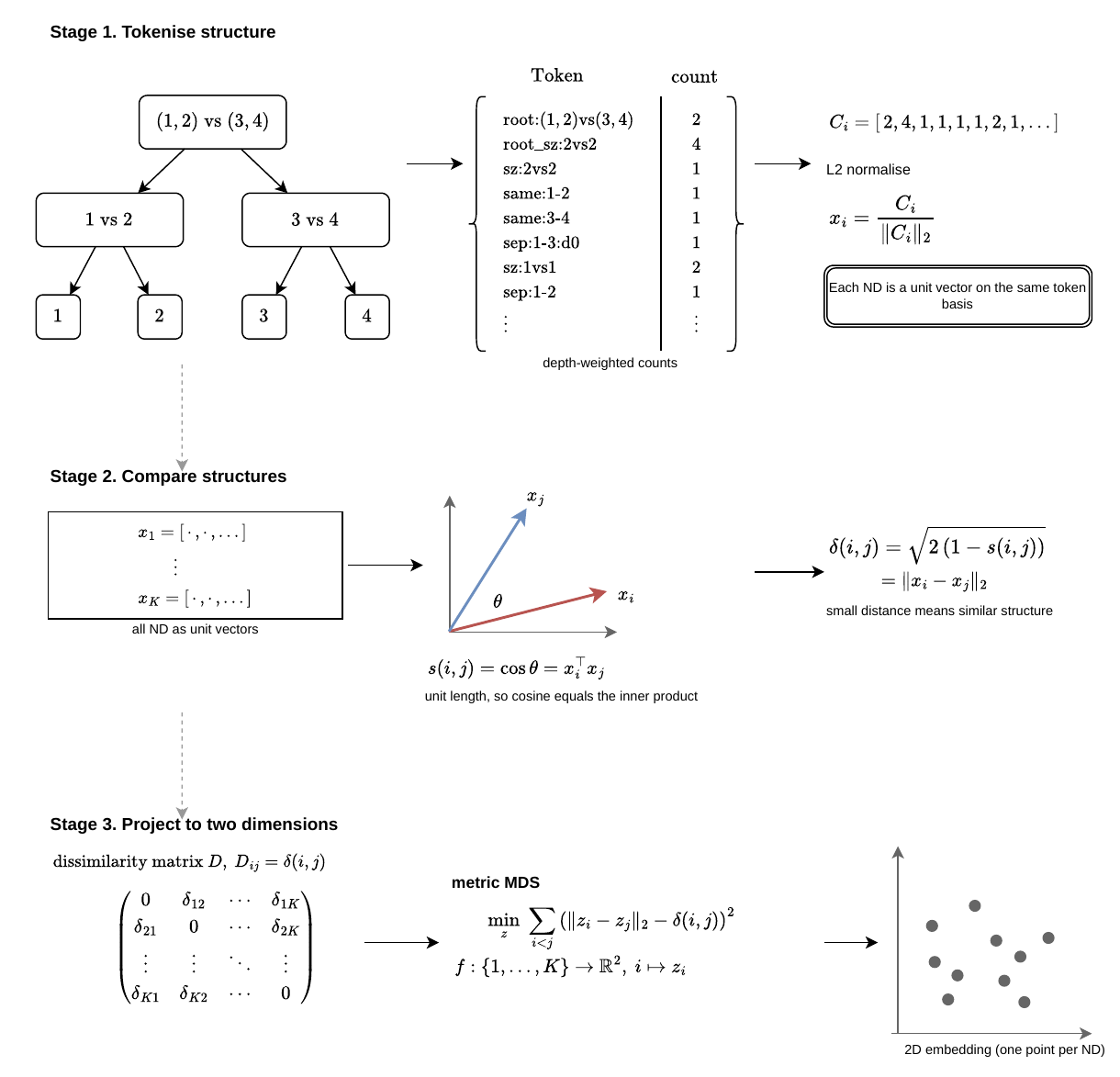}
    \caption{An example of the three steps to embed the ND space into 2D}
    \label{fig:nd-embedding}
\end{figure}

For stage 1, we represent each ND using tokens derived from its internal nodes. For an $M$-class problem, let $d \ge 0$ denote node depth from the root, and let $d_{\max} = M-2$ be the maximum depth, as that is the largest possible internal-node depth for an $M$-class binary tree. 

For each internal split at depth $d \le d_{\max}$ with child sets $L$ and $R$, we sort the classes within each child and order the two children by size, larger first. This gives a canonical ordered pair $(A,B)$, so mirrored nodes yield identical tokens. From each ordered split we generate the token families below, each written as a literal string. In a token, \texttt{|A|} and \texttt{|B|} are the numbers of classes on the two sides $A$ and $B$, \texttt{i} and \texttt{j} are single classes from opposite sides, and a suffix \texttt{d}$k$ records the node depth $k$ capped at 5, so the labels run \texttt{d0} to \texttt{d5} and any deeper node also uses \texttt{d5}. A token's multiplicity is the number of times it is added to the count vector, and the depth weight $w_d = \max(1, \lfloor (d_{\max} + 1 - d)/2 \rfloor)$ is largest at the root and decreases with depth, so upper-level splits count for more.

\begin{description}
  \item[Size pattern.] The token \texttt{sz:|A|vs|B|}, the size of each side, with multiplicity $w_d$. A root split of classes 1, 2 against 3, 4 gives \texttt{sz:2vs2}.
  \item[Separation pairs.] For each class $i$ on side $A$ and class $j$ on side $B$, two tokens that both record $i$ and $j$ being separated. The first, \texttt{sep:i-j:sz:|A|vs|B|}, also notes the split size and has multiplicity $w_d$. The second, \texttt{sep:i-j:d}$k$, notes the depth and has multiplicity 1. For that split the pair $1,3$ gives \texttt{sep:1-3:sz:2vs2} and \texttt{sep:1-3:d0}.
  \item[Co-membership.] For each pair of classes $i$ and $j$ kept on the same side, the token \texttt{same:i-j} with multiplicity 1, recording which classes the split groups together rather than separates. Classes 1 and 2 give \texttt{same:1-2}.
  \item[Root emphasis.] At the root ($d = 0$) only, three further families with extra weight. The full split \texttt{root:AvsB}, listing the classes on each side, with multiplicity 2; the root size \texttt{root\_sz:|A|vs|B|}, the counts, with multiplicity 4; and root copies of the separation tokens with multiplicity 2. The same split gives \texttt{root:1,2vs3,4} and \texttt{root\_sz:2vs2}.
  \item[Small-group emphasis ($M \ge 7$).] For ~ every ~ side ~ $S$ ~ holding ~ 2 ~ or ~ 3 ~ classes, ~ the ~ token  \texttt{small\_group\_size:|S|:d}$k$ with multiplicity $M - 6$, so small isolated clusters keep a stable weight as the number of classes grows. A two-class side split off at depth 2 gives \texttt{small\_group\_size:2:d2}.
\end{description}

The multiplicities encode two structural assumptions. First, upper-level splits affect more downstream probabilities and therefore receive greater weight. Second, the root partition receives separate emphasis because it is the only split shared by all class probabilities. Pair tokens record the class separations and co-memberships that determine the binary subproblems induced by the tree.

Stage 2 converts these tokens into a common vector space. Let $C_{i,t}$ denote the count of token $t$ in ND $i$. For each ND $i$, we form the vector $C_i$ and normalise to unit length
\begin{equation}
x_i = \frac{C_i}{\lVert C_i \rVert_2}.
\end{equation}

This tokenisation is implemented in \texttt{scikit-learn} with \texttt{use\_idf=False} and L2 normalisation \citep{scikit-learn}. 

We compare NDs using cosine similarity, which, after L2 normalisation, reduces to the inner product
\begin{equation}
s(i,j) = \cos\angle(x_i, x_j) = x_i^\top x_j,
\end{equation}
and we convert similarity to dissimilarity using the chord distance,
\begin{equation}
\delta(i,j) = \sqrt{2(1 - s(i,j))}.
\end{equation}
For unit vectors $x_i$ and $x_j$, $\delta(i,j) = \lVert x_i - x_j \rVert_2$, so the dissimilarity is exactly the Euclidean distance between the normalised vectors. This aligns the dissimilarity with the Euclidean objective, which is minimised at the next stage \citep{kruskal1964}.

For stage 3, metric multidimensional scaling places each ND at a point $z_i \in \mathbb{R}^2$ so that the Euclidean distance between any two points matches their dissimilarity $\delta(i,j)$ as closely as possible. The points minimise
\begin{equation}
\sum_{i<j}\left(\lVert z_i - z_j\rVert_2 - \delta(i,j)\right)^2 .
\end{equation}

We use metric MDS rather than a neighbourhood-based projection because the spatial statistics in Section~\ref{sec:spatial_summaries} compare each structure with its neighbours and require that absolute distances be preserved. Methods such as t-SNE and UMAP preserve local proximity but distort longer-range distances, which would bias Moran's $I$. We assess the projection's trustworthiness at $k=20$ by the fraction of each point's 20 nearest neighbours in token space that remain among its 20 nearest neighbours in two dimensions. The map is fixed for a class count and uses no other data, fitted models, or performance values.

The \texttt{Python} implementation is available at \url{https://github.com/res-lucid/ndscape} \citep{ndscape}.
\subsection{Bias and variance analysis}\label{sec:biasvariance}

A multiclass model can fail in two ways. Its predictions can be systematically wrong on average, or they can swing widely from one training sample to the next. Loss alone cannot tell these apart. As a standard procedure, we therefore decompose loss into squared bias and variance and analyse them separately \citep{hastie2009elements,friedman1997bias}.

For a multiclass problem, let
\[
p_m(x)=P(Y=m\mid X=x)
\]
be the true probability of class \(m\) at input point \(x\), and let \(\hat p_m(x)\) be the fitted probability for the same class. We define the classwise squared probability loss by
\begin{equation}
L_m(x)=\left(\hat p_m(x)-p_m(x)\right)^2.
\end{equation}
The corresponding squared bias is
\begin{equation}
\operatorname{Bias}_m^2(x)=\left(\mathbb E[\hat p_m(x)]-p_m(x)\right)^2,
\end{equation}
with variance
\begin{equation}
\operatorname{Var}_m(x)=
\mathbb E\!\left[\left(\hat p_m(x)-\mathbb E[\hat p_m(x)]\right)^2\right].
\end{equation}
For class \(m\), the expected squared probability loss is
\begin{equation}
\mathbb E[L_m(x)] = \operatorname{Bias}_m^2(x)+\operatorname{Var}_m(x).
\end{equation}

A proof is given in Appendix~\ref{appendix:bias_variance}. Summing over classes \(\{1,\dots,M\}\) gives the multiclass squared probability loss
\begin{equation}\label{eq:loss}
L(x)=\sum_{m=1}^M L_m(x),
\end{equation}
with decomposition
\begin{equation}
\mathbb E[L(x)]
=
\sum_{m=1}^M \operatorname{Bias}_m^2(x)
+
\sum_{m=1}^M \operatorname{Var}_m(x).
\end{equation}
This decomposition applies to any multiclass probabilistic model, including an ND.

In our simulation experiments, the true class probabilities are known from the data-generating process, and expectation can be approximated by repeatedly generating training datasets \citep{morris2019using,domingos2000unified}. For a fixed ND structure \(T\), let \(D_1,\dots,D_R\) be \(R\) independently generated training datasets, and let \[ \hat p^{(T)}_{D_r,m}(x) \] denote the fitted probability for class \(m\) at evaluation point \(x\) from the model trained on dataset \(D_r\). Replacing expectations by averages over \(D_1,\dots,D_R\), we approximate the classwise squared bias and variance at \(x\) by
\begin{equation}\label{eq:bias}
\widehat{\operatorname{Bias}}_m^2(T,x)
=
\left(
\frac{1}{R}\sum_{r=1}^R \hat p^{(T)}_{D_r,m}(x)-p_m(x)
\right)^2,
\end{equation}
and
\begin{equation}
\widehat{\operatorname{Var}}_m(T,x)
=
\frac{1}{R}\sum_{r=1}^R
\left(
\hat p^{(T)}_{D_r,m}(x)
-
\frac{1}{R}\sum_{s=1}^R \hat p^{(T)}_{D_s,m}(x)
\right)^2.
\end{equation}\label{eq:var}
The corresponding multiclass loss is estimated at \(x\) by
\begin{equation}
\widehat{L}(T,x)
=
\sum_{m=1}^M \widehat{\operatorname{Bias}}_m^2(T,x)
+
\sum_{m=1}^M \widehat{\operatorname{Var}}_m(T,x).
\end{equation}

Following general principles for simulation design \citep{morris2019using}, we hold the data-generating process fixed within each simulation and vary only the sampled training data. The generating ND, the feature-generation mechanism, and the sample size remain unchanged across repetitions, and the evaluation set is generated once and then held fixed. This isolates sampling variation and allows probability-based estimates of bias and variance to be calculated. In the experiments reported here, we independently generate training datasets of size \(n=5{,}000\), together with a common evaluation set of size \(N=20{,}000\). 

To choose the number of repetitions ($R$), we examined the relative standard error of the estimated performance metrics. Using the loss in Equation~\eqref{eq:loss}, define

\begin{equation}
\label{eq:Ar}
\mathcal{L}_r(T)=\frac{1}{N}\sum_{i=1}^N\sum_{m=1}^M\left(\hat p^{(T)}_{D_r,m}(x_i)-p_m(x_i)\right)^2,
\end{equation}
where $r=1,\ldots,R$ indexes the training-data repetition. Equation \eqref{eq:Ar} is the evaluation-set average squared probability loss for structure \(T\) in repetition \(r\), and its average is,
\[
\bar{\mathcal{L}}_R(T)=\frac{1}{R}\sum_{r=1}^R \mathcal{L}_r(T).
\]

The Monte Carlo standard error of $\bar{\mathcal{L}}_R(T)$ is estimated by,
\[
\widehat{\mathrm{SE}}(\bar{\mathcal{L}}_R(T))=\frac{s_\mathcal{L}}{\sqrt{R}},
\]
where \(s_\mathcal{L}\) is the standard deviation of \(\mathcal{L}_1(T),\dots,\mathcal{L}_R(T)\). We then report the relative standard error 
\[
\mathrm{RelSE}_{\mathcal{L}}(R)
=
\frac{\widehat{\mathrm{SE}}(\bar{\mathcal{L}}_R(T))}{\bar{\mathcal{L}}_R(T)}.
\]

For squared bias and variance, we estimate Monte Carlo error by bootstrapping the completed repetitions \citep{Efron1981}. Let \(\widehat{\theta}_R(T)\) denote either the estimated squared bias or the estimated variance for structure \(T\) computed from \(R\) repetitions, and let \(\widehat{\mathrm{SE}}_{\mathrm{boot}}(\widehat{\theta}_R(T))\) be the corresponding bootstrap standard error. We then report,

\[
\mathrm{RelSE}_{\mathrm{Bias}^2}(R)
=
\frac{\widehat{\mathrm{SE}}_{\mathrm{boot}}(\widehat{\mathrm{Bias}^2}_R(T))}{\widehat{\mathrm{Bias}^2}_R(T)}, \textnormal{ and }
\mathrm{RelSE}_{\mathrm{Var}}(R)
=
\frac{\widehat{\mathrm{SE}}_{\mathrm{boot}}(\widehat{\mathrm{Var}}_R(T))}{\widehat{\mathrm{Var}}_R(T)}.
\]

For precise estimates, we use a 1\% relative standard error threshold as a conservative criterion for choosing \(R\).

For each ND structure \(T\), the final bias, variance, and loss values are obtained by averaging these quantities over the evaluation set \(x_1,\dots,x_N\). We attach these ND-level values to the embedding coordinates of \(T\). The embedding is a structural map of the ND space, and colouring it by squared bias, variance, or squared probability loss shows how each varies across that space.

\subsection{Spatial summaries}\label{sec:spatial_summaries}

The ndscape embedding maps the ND space to two dimensions, and two NDs are near one another when they lie close on the map. To measure whether nearby NDs have similar performance, we compute global Moran's $I$,
\[
I = \frac{n}{S_0} \frac{\sum_i \sum_j w_{ij}(x_i - \bar{x})(x_j - \bar{x})}{\sum_i (x_i - \bar{x})^2}, \qquad S_0 = \sum_i \sum_j w_{ij},
\]

where $n$ is the number of NDs, $x_i$ is the metric value for ND $i$, $\bar{x}$ is its mean, and $w_{ij}$ is the row-standardised neighbourhood weight between NDs $i$ and $j$ \citep{Moran1950}. Significance is assessed by 999 random permutations of the metric values over the fixed embedding locations. We report the upper-tail pseudo $p$-value,
\[
p = \frac{m+1}{1000},
\]
where $m$ is the number of permutations with Moran's $I$ as large or larger than the observed value. The smallest reportable value is therefore $p=0.001$.

A positive Moran's I means nearby NDs have similar values, near zero means no spatial pattern, and negative means neighbours differ. We compute Moran's $I$ separately for squared bias, variance, and squared probability loss. Neighbourhoods are defined using a directed $k$-nearest-neighbour graph with $k = 50$, followed by row standardisation so each row sums to one. We use the same $k$ at every class count so the statistic is comparable across runs. For the enumerated 6 and 7-class spaces, $k = 50$ is a neighbourhood of roughly $5\%$ and $0.5\%$ of the space, large enough for a stable local average yet small enough to remain local, and a sweep that increases $k$ from $10$ in steps of $5$ shows both graphs first become connected near $k = 50$. For the sampled 8 to 10-class spaces of 15{,}000 NDs, the same $k$ is an even smaller fraction.

We also report the bivariate version of Moran's $I$, which we use as a spatial-lag association between one variable at $i$ and the spatial lag of another variable at the neighbours of $i$,
\[
I_{xy} = \frac{n}{S_0} \frac{\sum_i \sum_j w_{ij}(x_i - \bar{x})(y_j - \bar{y})}{\sqrt{\sum_i (x_i - \bar{x})^2 \sum_i (y_i - \bar{y})^2}}.
\]
For our main use of this statistic, $x_i$ is the variance at ND $i$ and $y_i$ is the mean root-to-leaf depth at ND $i$, so $I_{xy}$ measures whether an ND with high variance tends to sit near NDs with high depth. We do not interpret it as a pointwise correlation between variance and depth for the same tree.

A second summary measures whether the lowest-decile structures for each metric lie near the generating ND, rather than only near each other. For each simulation, we identify the \(10\%\) of candidate structures nearest to the generating ND in the embedding and report the fraction of the relevant lowest decile that falls inside this set. Under no structural relationship, this fraction is \(10\%\). We also report the AUC of distance to the generating ND as a ranking signal for the lowest decile, where \(0.5\) indicates no relationship.

\section{Simulation study}\label{sec:simulation_study}
This section describes the simulation design (Section~\ref{sec:simulation_design}), an ablation comparing structural representations (Section~\ref{sec:ablation}), and the main results (Section~\ref{sec:simulation_results}). We evaluate all structures at 6 and 7 classes, sample structures at 8 to 10 classes, and report the two regimes together.

\subsection{Simulation design}\label{sec:simulation_design}
 
The simulation study tests whether structural proximity in the ndscape embedding corresponds to similar bias, variance, and squared probability loss. For 6 and 7-class problems, every possible structure can be evaluated. At higher class counts, full enumeration becomes impractical. The 8-class space contains $13!! = 135{,}135$ structures, rising to $15!! = 2{,}027{,}025$ at 9 classes and $17!! = 34{,}459{,}425$ at 10 classes. In these settings, we sample 15{,}000 structures per generator. Appendix~\ref{appendix:sampling} gives the complete sampling procedure. 

The main simulation setting has 1000 features, with only 50 features used at each node. For 6 and 7 classes, we also include two dense variants, one with 10 features and one with 1000 features, in which every feature is used at every node. Appendix~\ref{appendix:feature_gen} reports these variants and gives the full feature distributions. For each class count, we construct the ndscape embedding, cluster the embedded structures by $k$-means with the number of clusters chosen by the silhouette coefficient \citep{rousseeuw1987}, and select generating NDs stratified across clusters. For 6 and 7 classes, we use 50 generating structures, and for 8, 9, and 10 classes, we use 5. Each generating ND fixes the true class probabilities through a rule attached to every internal node. For the main simulations, each internal node of the generating ND uses a logistic model to assign a probability to one of its two child branches, and class probabilities are obtained by multiplying the branch probabilities along each root-to-leaf path. Appendix~\ref{appendix:target_gen} gives the decision-tree generator used for the sensitivity analysis, which follows the same recursion. Candidate NDs are fitted to data sampled from this fixed generator. Table~\ref{tab:simulation_counts} shows that the main study fits 79.2 million NDs, which is the reason for using fewer generating structures for the 8- to 10-class runs.

\begin{table}[t]
\centering
\caption{Number of fitted ND structures in the simulation study. Each ND is fitted with a logistic regression model at each internal node}
\label{tab:simulation_counts}
\small
\renewcommand{\arraystretch}{1.28}
\begin{tabular*}{\textwidth}{@{\extracolsep{\fill}}ccccc@{}}
\toprule
Classes & ND per generator & Generators & Repetitions & ND model fits\\
\midrule
6  & 945    & 50 & 100 & 4{,}725{,}000\\
7  & 10{,}395 & 50 & 100 & 51{,}975{,}000\\
8  & 15{,}000 & 5  & 100 & 7{,}500{,}000\\
9  & 15{,}000 & 5  & 100 & 7{,}500{,}000\\
10 & 15{,}000 & 5  & 100 & 7{,}500{,}000\\
\midrule
Total & -- & -- & -- & 79{,}200{,}000\\
\bottomrule
\end{tabular*}
\end{table}

For each generator, we draw independent training datasets of size $n = 5{,}000$ and hold a common evaluation set of size $N_{\text{eval}} = 20{,}000$ fixed. These sizes were chosen to stabilise the binary node fits and reduce noise in the evaluation-set averages. The binary models in the main study are \texttt{scikit-learn} binary logistic regression with L2 regularisation, the \texttt{newton-cholesky} solver, $C = 0.1$, and \texttt{max\_iter = 2000} \citep{scikit-learn}. We also present similar results using the decision tree base learner in Appendix~\ref{appendix:dt}.

Pilot runs were used to choose the number of repetitions $R$. In the 6-class pilot, the $1\%$ relative Monte Carlo error threshold was reached by $R = 10$ for squared probability loss, $R = 30$ for squared bias, and $R = 40$ for variance, and the 7-class pilot showed the same pattern. The pilot curves are shown in Appendix~\ref{appendix:mc_error}. We set $R = 100$ for the main study, which keeps the Monte Carlo uncertainty below $1\%$ for all three measures and leaves headroom so that the 8- to 10-class estimates, where fewer generators are averaged, remain comparably precise. 

\subsection{Ablation study}\label{sec:ablation}

The ablation compares three structural representations on the 1000-feature sparse simulation in the 6 and 7-class setting. The depth-weighted tokeniser is the representation from Section~\ref{sec:embedding}. Bag-of-words (BoW) uses the same token vocabulary without depth-weighted multiplicities, so every token contributes once regardless of depth. The Robinson--Foulds distance (RF) counts the bipartitions present in one ND but absent from the other \citep{Robinson1981}. We include RF because it is the standard topological distance between phylogenetic trees, which makes it the natural depth-agnostic baseline to test against. For RF, we convert each rooted ND into its set of internal bipartitions and use the symmetric difference of these sets as the dissimilarity measure. 

We also measure whether each representation places low-error structures near the generating ND. For each generator, candidate structures are ranked by their structural distance to the generator. AP@10 is the average precision over the ten nearest NDs, with structures in the lowest-error decile treated as relevant. Each representation is projected into two dimensions using metric MDS and assessed using Moran's $I$ and AP@10. Because all three representations are evaluated on the same 50 generators, the comparison is paired, and Table~\ref{tab:ablation-1000-sparse} reports the mean and standard deviation across generators.

The depth-weighted tokeniser has the highest mean in every row at both class counts. The Robinson--Foulds embedding is the weakest representation across all rows, consistent with its insensitivity to depth and class grouping. The depth-weighted tokeniser also outperforms BoW. This supports the use of depth-weighted token multiplicities.

\begin{table}[!htbp]
\centering
\caption{Ablation across three structural representations (tokeniser, BoW, Robinson--Foulds) in the 1000-feature sparse simulation. Values are mean $\pm$ SD over 50 generators. SPL denotes squared probability loss. Bold marks the row maximum. See Section~\ref{sec:spatial_summaries} for Moran's $I$ and Section~\ref{sec:ablation} for AP@10 definitions}
\label{tab:ablation-1000-sparse}
\small
\renewcommand{\arraystretch}{1.28}
\begin{tabular*}{\textwidth}{@{\extracolsep{\fill}}lccc@{}}
\toprule
Measure & Tokeniser & BoW & RF \\
\midrule
\multicolumn{4}{l}{\textit{6 classes}} \\
Moran $I$ (bias$^2$) & \textbf{0.356 $\pm$ 0.155} & 0.256 $\pm$ 0.148 & 0.171 $\pm$ 0.163 \\
Moran $I$ (var) & \textbf{0.541 $\pm$ 0.044} & 0.271 $\pm$ 0.048 & 0.033 $\pm$ 0.019 \\
Moran $I$ (SPL) & \textbf{0.300 $\pm$ 0.138} & 0.248 $\pm$ 0.148 & 0.178 $\pm$ 0.168 \\
AP@10 (bias$^2$) & \textbf{0.372 $\pm$ 0.138} & 0.275 $\pm$ 0.119 & 0.265 $\pm$ 0.115 \\
AP@10 (SPL) & \textbf{0.379 $\pm$ 0.124} & 0.294 $\pm$ 0.119 & 0.273 $\pm$ 0.113 \\
\midrule
\multicolumn{4}{l}{\textit{7 classes}} \\
Moran $I$ (bias$^2$) & \textbf{0.473 $\pm$ 0.119} & 0.256 $\pm$ 0.107 & 0.188 $\pm$ 0.120 \\
Moran $I$ (var) & \textbf{0.726 $\pm$ 0.048} & 0.214 $\pm$ 0.034 & 0.045 $\pm$ 0.018 \\
Moran $I$ (SPL) & \textbf{0.378 $\pm$ 0.103} & 0.263 $\pm$ 0.121 & 0.208 $\pm$ 0.121 \\
AP@10 (bias$^2$) & \textbf{0.297 $\pm$ 0.140} & 0.237 $\pm$ 0.078 & 0.179 $\pm$ 0.059 \\
AP@10 (SPL) & \textbf{0.297 $\pm$ 0.119} & 0.245 $\pm$ 0.081 & 0.191 $\pm$ 0.062 \\
\bottomrule
\end{tabular*}
\end{table}

\subsection{Simulation results}\label{sec:simulation_results}

Squared bias and squared probability loss are spatially organised at every class count (see Table~\ref{tab:scaling}). Squared bias has a mean Moran's $I$ of $0.356$ at 6 classes, rising to $0.473$ at 7 classes and staying between $0.34$ and $0.39$ from 8 to 10 classes. Squared probability loss follows the same shape at a lower level, from $0.300$ at 6 classes to $0.378$ at 7 and between $0.26$ and $0.31$ from 8 to 10 classes. All Moran's $I$ values in Table~\ref{tab:scaling} are positive and statistically significant under permutation testing, with $p \le 0.001$.

\begin{table}[t]
\centering
\caption{Spatial autocorrelation in the ndscape embedding (depth-weighted tokeniser) for 6 to 10 classes, 1000-feature sparse setting. $I$ is univariate Moran's $I$, and $I_{xy}$ is the bivariate Moran's $I$ between variance and mean root-to-leaf depth. The 6- and 7-class rows are averaged over 50 generators and the 8, 9, and 10-class rows over 5. Cells report mean $\pm$ SD, and every $I$ is significant at $p \le 0.001$} 
\label{tab:scaling}
\small
\renewcommand{\arraystretch}{1.28}
\begin{tabular*}{\textwidth}{@{\extracolsep{\fill}}lcccc@{}}
\toprule
Classes & $I$ (bias$^2$) & $I$ (loss) & $I$ (var) & $I_{xy}$ (var, depth) \\
\midrule
6 & 0.356 $\pm$ 0.155 & 0.300 $\pm$ 0.138 & 0.541 $\pm$ 0.044 & 0.589 $\pm$ 0.077 \\
7 & 0.473 $\pm$ 0.119 & 0.378 $\pm$ 0.103 & 0.726 $\pm$ 0.048 & 0.762 $\pm$ 0.067 \\
8 & 0.393 $\pm$ 0.109 & 0.310 $\pm$ 0.086 & 0.791 $\pm$ 0.049 & 0.820 $\pm$ 0.056 \\
9 & 0.374 $\pm$ 0.098 & 0.297 $\pm$ 0.049 & 0.825 $\pm$ 0.055 & 0.865 $\pm$ 0.048 \\
10 & 0.342 $\pm$ 0.051 & 0.259 $\pm$ 0.022 & 0.833 $\pm$ 0.027 & 0.883 $\pm$ 0.025 \\
\bottomrule
\end{tabular*}
\end{table}

Table~\ref{tab:nearest-generator-enrichment} shows whether the lowest-error structures are concentrated near the generating ND. Under no relationship, the nearest 10\% of structures should contain 10\% of the lowest-error decile. At 6 classes, the nearest $10\%$ of structures hold $38.3\%$ of the lowest-bias decile and $39.1\%$ of the lowest-loss decile, both far above the $10\%$ expected under no relationship. The proportion decreases as the class count grows, reaching $8.7\%$ for bias and $9.9\%$ for loss at 10 classes, where the proximity signal is absent. The 8 to 10-class spaces are sampled to 15{,}000 NDs, so the local neighbourhood of the generator is represented more coarsely than in the enumerated spaces. The AUC of distance to the generator tells the same story, falling from about $0.72$ at 6 classes toward $0.5$ at 10.

\begin{table}[t]
\centering
\caption{Fraction of the lowest-decile structures captured near the generating ND, 1000-feature sparse setting. Recall is the fraction of each lowest decile inside the nearest 10\% of structures. AUC ranks each lowest decile by structural distance to the generator. Null values are 10\% and 0.5. Scores are the mean across 50 generators for 6- and 7-class problems, and across 5 generators for the others}
\label{tab:nearest-generator-enrichment}
\small
\renewcommand{\arraystretch}{1.28}
\begin{tabular*}{\textwidth}{@{\extracolsep{\fill}}lcccccc@{}}
\toprule
& \multicolumn{2}{c}{Squared bias} & \multicolumn{2}{c}{Variance} & \multicolumn{2}{c}{Squared prob.\ loss} \\
\cmidrule(lr){2-3} \cmidrule(lr){4-5} \cmidrule(lr){6-7}
Classes & Recall & AUC & Recall & AUC & Recall & AUC \\
\midrule
6 & 38.3\% & 0.716 & 26.1\% & 0.568 & 39.1\% & 0.711 \\
7 & 33.9\% & 0.678 & 19.1\% & 0.502 & 34.8\% & 0.671 \\
8 & 22.3\% & 0.604 & 16.0\% & 0.510 & 24.6\% & 0.611 \\
9 & 15.7\% & 0.559 & 6.9\% & 0.427 & 16.3\% & 0.582 \\
10 & 8.7\% & 0.478 & 13.3\% & 0.506 & 9.9\% & 0.494 \\
\bottomrule
\end{tabular*}
\end{table}

Figures~\ref{fig:6-class-bias-four-generators} and~\ref{fig:seven-class-bias-plots} give examples for the 6- and 7-class analyses, respectively. The point layout is fixed because the ndscape embedding depends only on topology. For both figures, when the generator moves, the low-bias region follows. The corresponding 8, 9, and 10-class maps are in Appendix~\ref{appendix:highclass} (Figures~\ref{fig:8-class-bias}, \ref{fig:9-class-bias}, and~\ref{fig:10-class-bias}) and show the same behaviour.

\begin{figure*}[t]
\centering
\setlength{\tabcolsep}{0pt}
\renewcommand{\arraystretch}{0}
\begin{minipage}[c]{0.91\textwidth}
\centering
\begin{tabular}{@{}cc@{}}
\includegraphics[height=0.225\textheight,trim={1.0cm 0.8cm 2.6cm 0.1cm},clip]{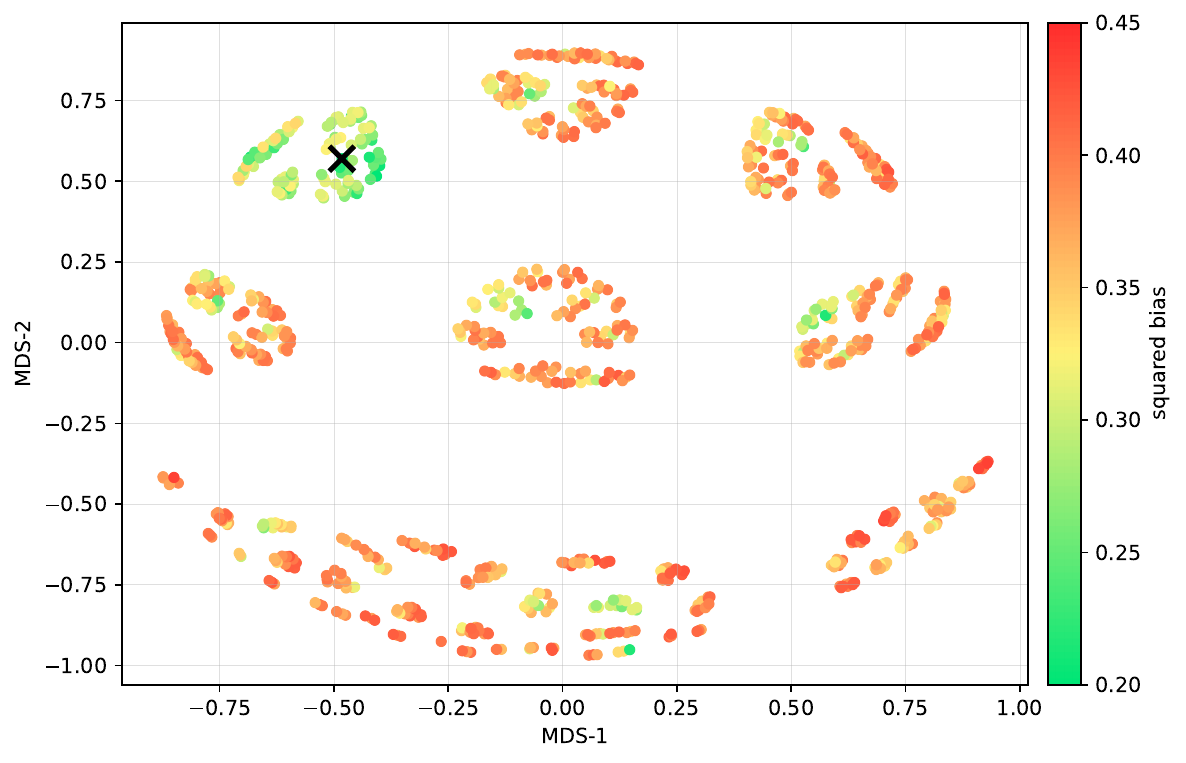}
&
\includegraphics[height=0.225\textheight,trim={1.8cm 0.8cm 2.6cm 0.1cm},clip]{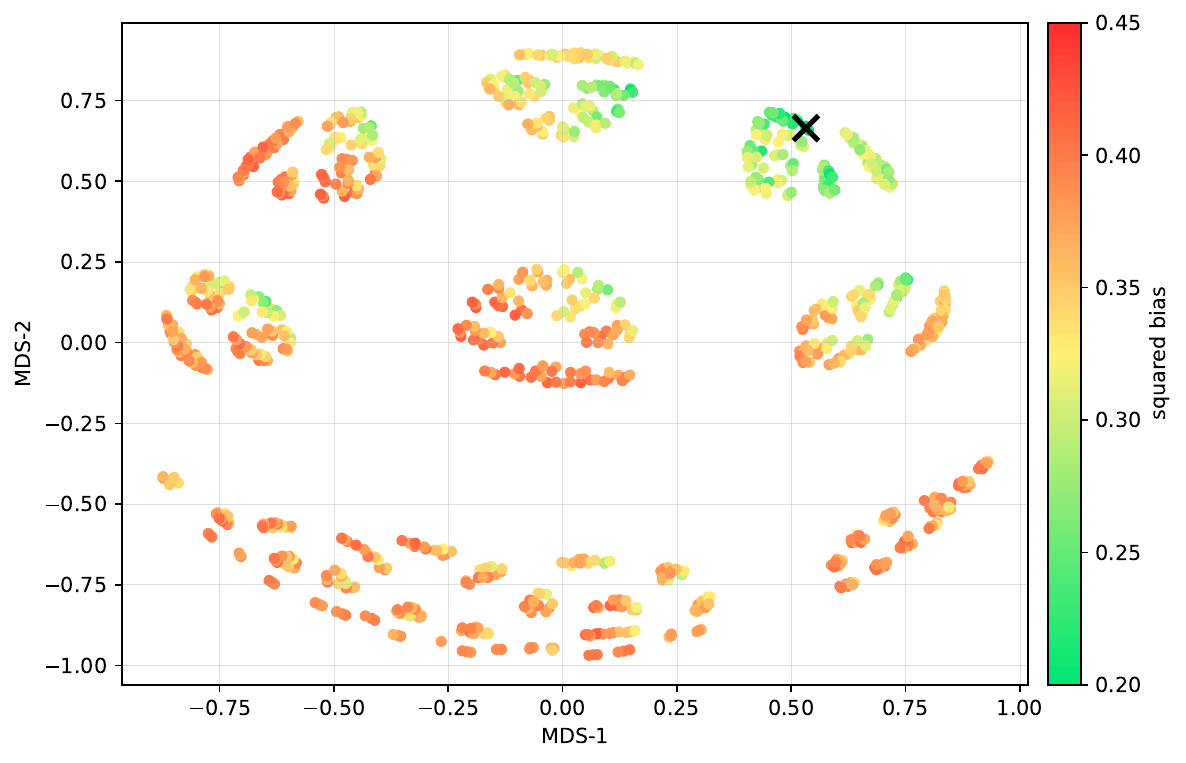}
\\[-0.18cm]
\includegraphics[height=0.225\textheight,trim={1.0cm 0.8cm 2.6cm 0.1cm},clip]{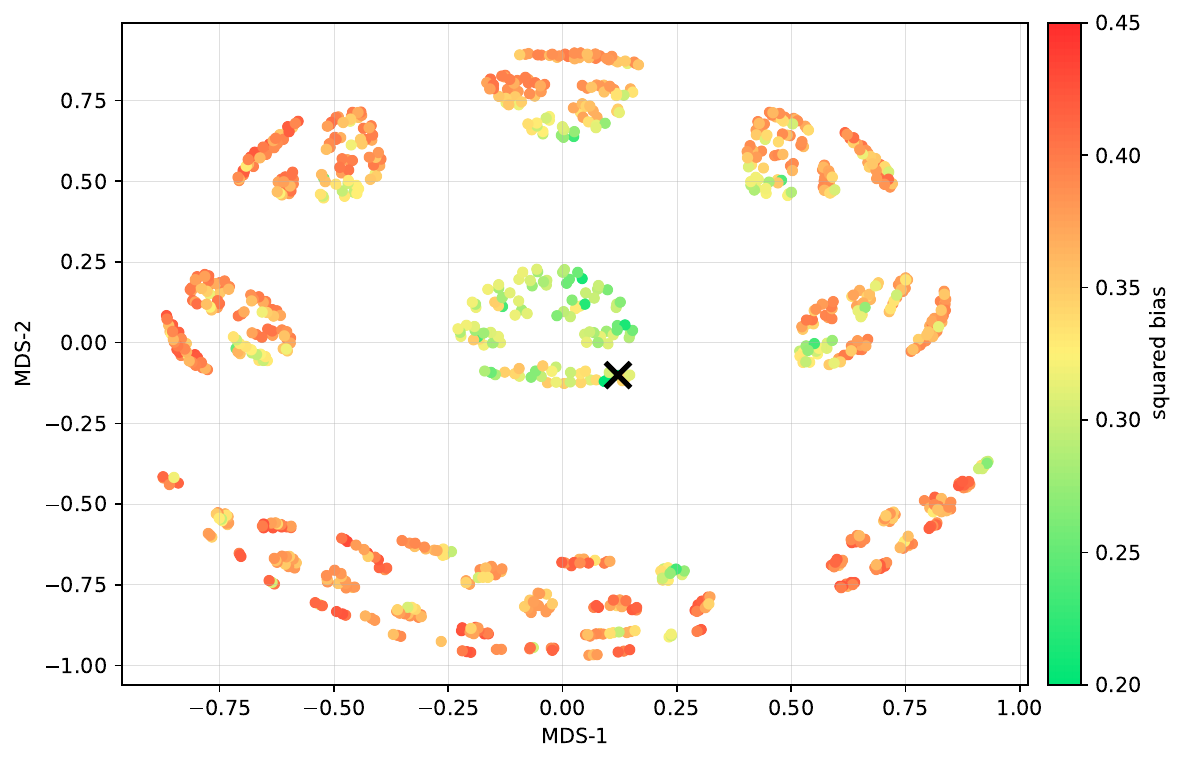}
&
\includegraphics[height=0.225\textheight,trim={1.8cm 0.8cm 2.6cm 0.1cm},clip]{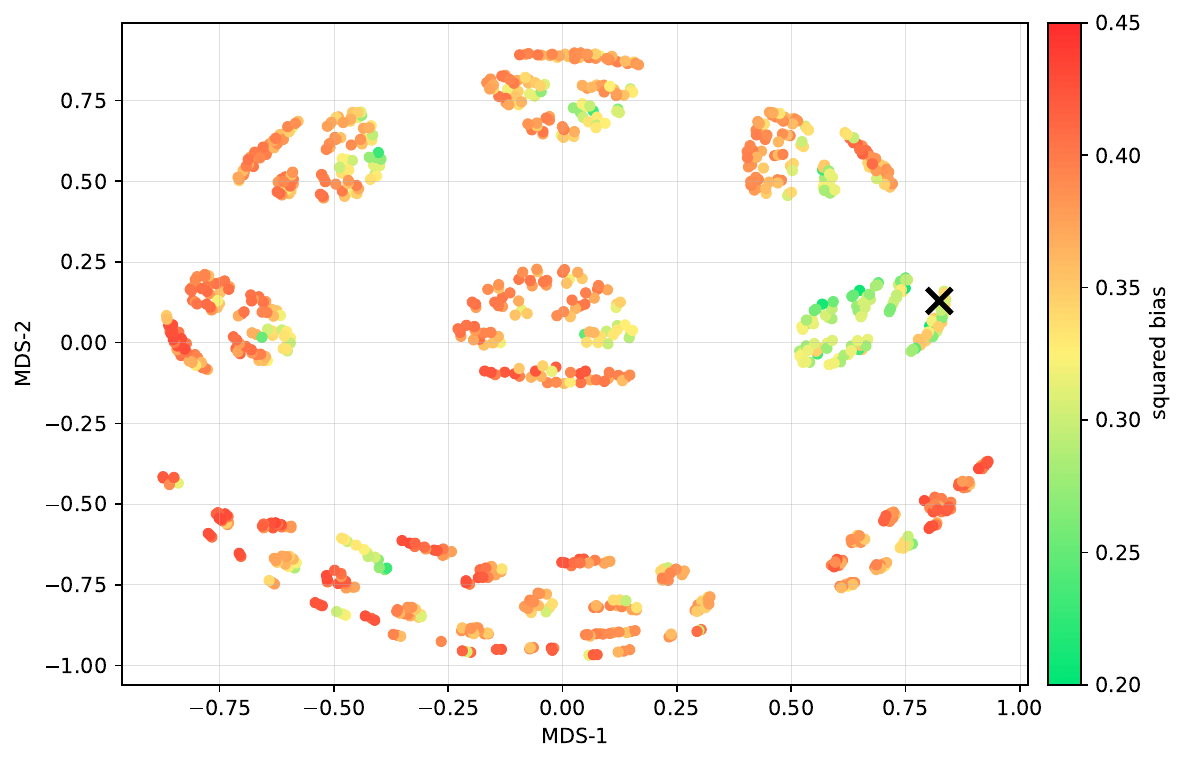}
\end{tabular}
\end{minipage}
\hspace{-0.01\textwidth}
\begin{minipage}[c]{0.08\textwidth}
\centering
\includegraphics[
    height=0.452\textheight,
    trim={17.7cm 1cm 0.15cm 0.05cm},
    clip
]{0_Figures/6-class-bias-right-cluster.pdf}
\end{minipage}
\caption{Squared bias across the 6-class ndscape embedding for four generating structures from the 1000 sparse-feature setting. The cross marks the generating ND. Low-bias regions move with the generator rather than staying fixed in one part of the map. Interactive plots are available at \url{https://res-lucid.github.io/ndscape-plots/bias-variance/}}
\label{fig:6-class-bias-four-generators}
\end{figure*}

\begin{figure*}[t]
\centering
\setlength{\tabcolsep}{0pt}
\renewcommand{\arraystretch}{0}
\begin{minipage}[c]{0.91\textwidth}
\centering
\begin{tabular}{@{}cc@{}}
\includegraphics[height=0.225\textheight,trim={1.0cm 0.8cm 2.6cm 0.1cm},clip]{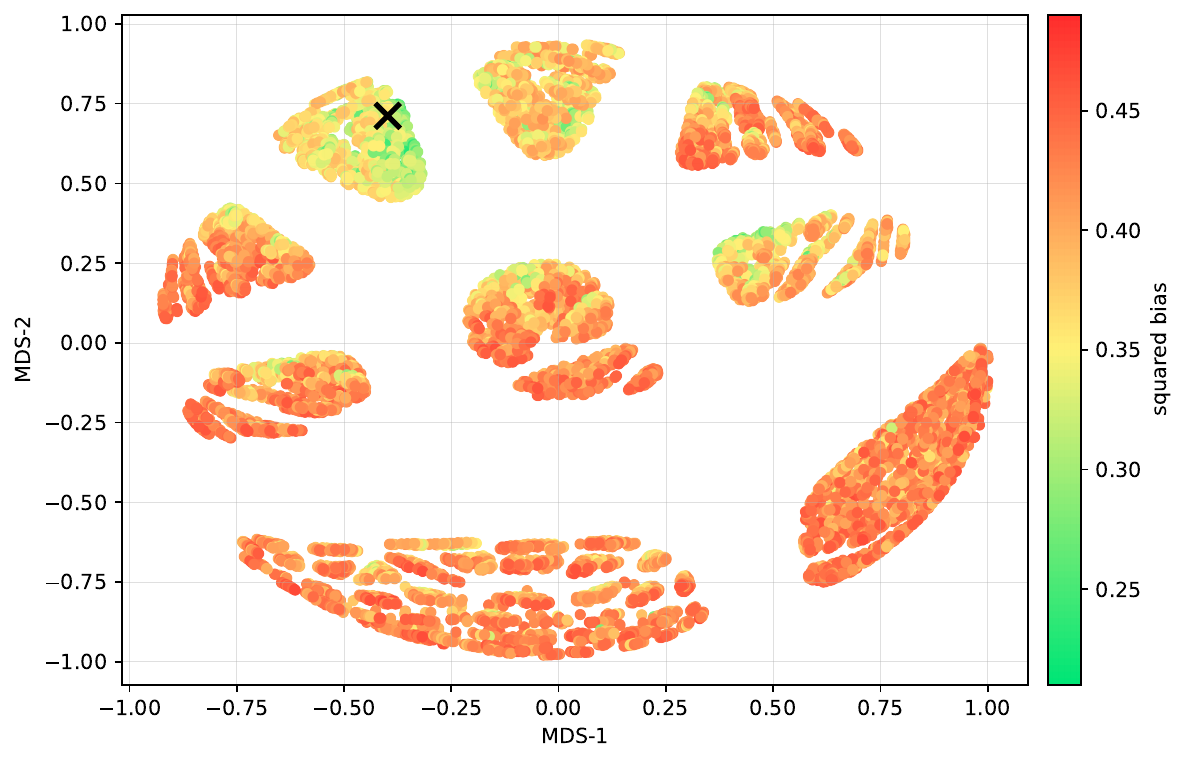}
&
\includegraphics[height=0.225\textheight,trim={1.9cm 0.8cm 2.6cm 0.1cm},clip]{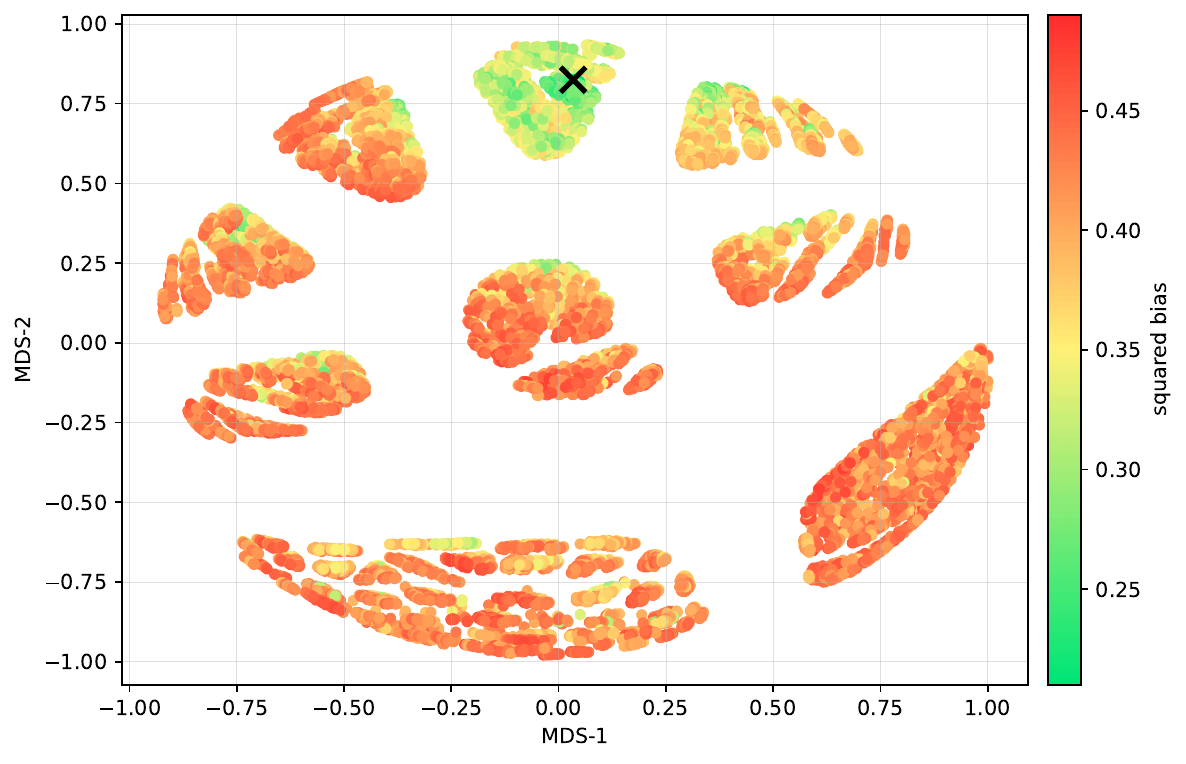}
\\[-0.18cm]
\includegraphics[height=0.225\textheight,trim={1.0cm 0.8cm 2.6cm 0.1cm},clip]{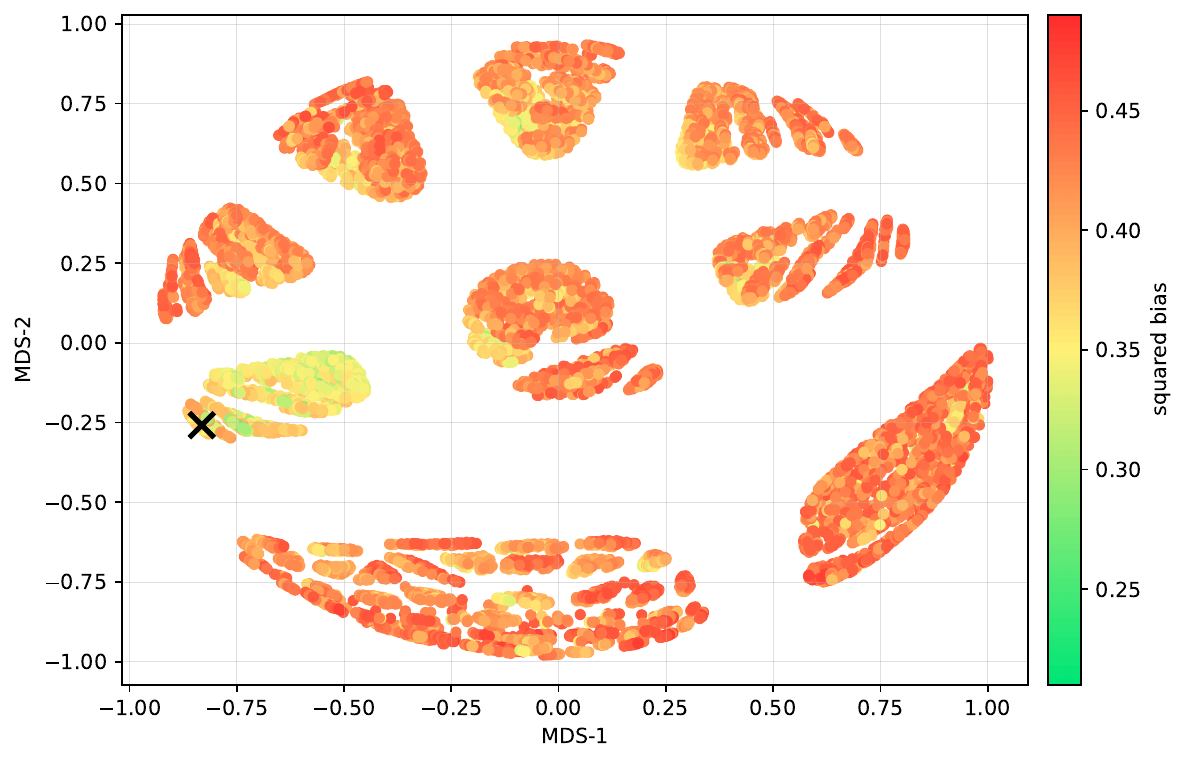}
&
\includegraphics[height=0.225\textheight,trim={1.9cm 0.8cm 2.6cm 0.1cm},clip]{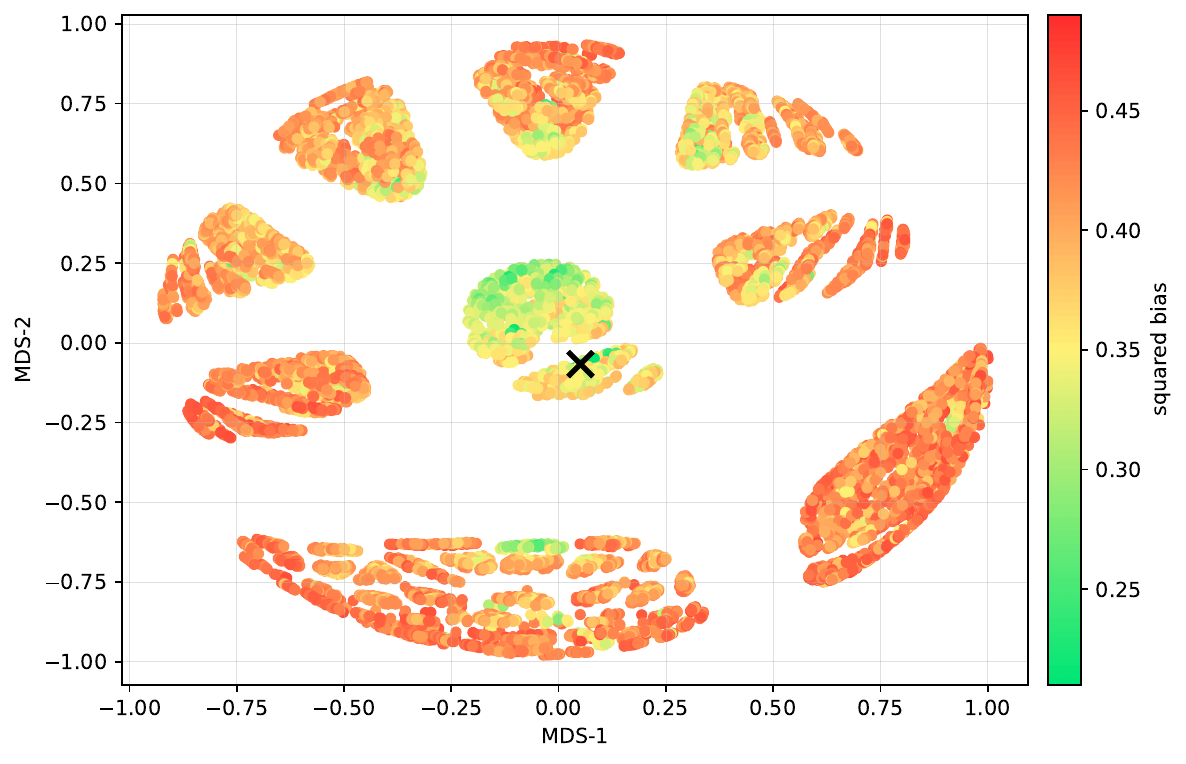}
\end{tabular}
\end{minipage}
\hspace{-0.01\textwidth}
\begin{minipage}[c]{0.08\textwidth}
\centering
\includegraphics[
    height=0.452\textheight,
    trim={17.7cm 1cm 0.15cm 0.05cm},
    clip
]{0_Figures/7-bias-center-cluster.pdf}
\end{minipage}
\caption{Squared bias across the 7-class ndscape embedding for four generating structures. Local concentration of low-bias NDs around the generating ND persists in the larger space}
\label{fig:seven-class-bias-plots}
\end{figure*}

Variance behaves differently from squared bias and squared probability loss. It is spatially organised more strongly than either, with Moran's $I$ rising from $0.541$ at 6 classes to $0.833$ at 10 classes, but its enrichment near the generator is weak. At 6 classes, the nearest $10\%$ of structures hold only $26.1\%$ of the lowest-variance decile, and at 7 classes, the AUC of distance to the generator is $0.502$, no better than chance. Instead, mean root-to-leaf depth organises variance. Longer paths multiply more conditional probability estimates from root to leaf, and each estimate carries its own sampling variation. Deeper NDs are therefore more variable, and more balanced NDs are more stable. The bivariate Moran's $I$ between variance and depth captures this, rising from $0.589$ at 6 classes to $0.883$ at 10 classes and exceeding the univariate $I$ for bias and loss at every class count. The non-spatial Spearman correlation between a tree's variance and its mean depth is $0.72$ for 6 classes and increases to $0.88$ for 10 classes. Mean path length is a stronger and more stable organiser of variance than the generating ND is of bias or loss. Figures~\ref{fig:variance-depth-6-class} and~\ref{fig:variance-depth-7-class} show this, with variance rising in the deeper regions of the map regardless of the generating ND.
 
\begin{figure}[t]
\centering
\makebox[\textwidth][c]{%
\begin{minipage}{0.52\textwidth}
    \centering
    \includegraphics[width=\linewidth]{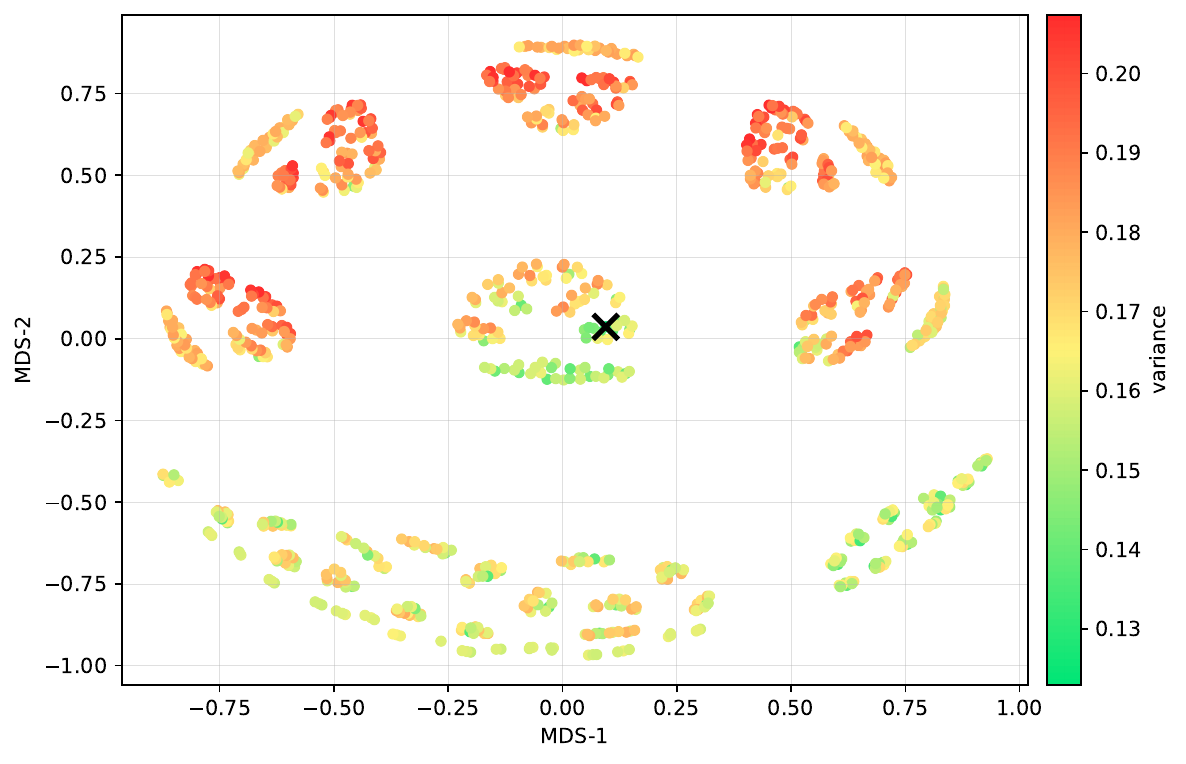}
    \caption*{(a) Variance}
\end{minipage}
\hspace{-0.02\textwidth}
\begin{minipage}{0.52\textwidth}
    \centering
    \includegraphics[width=\linewidth]{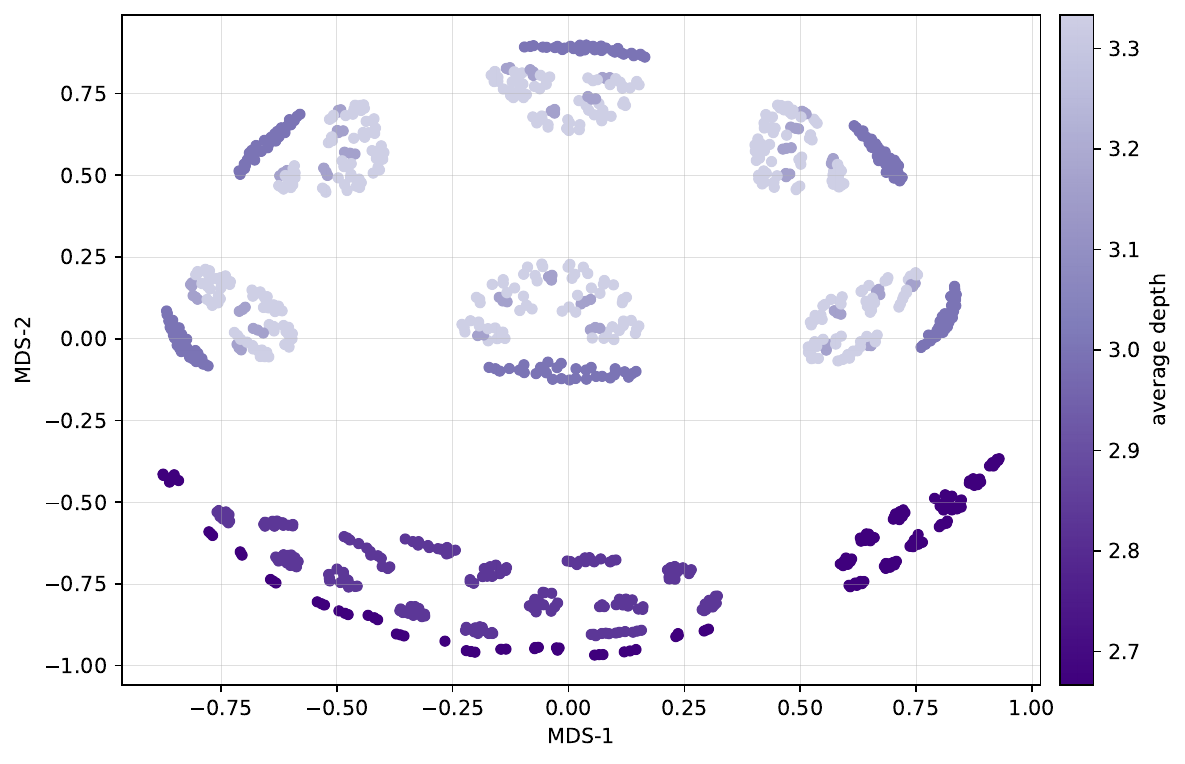}
    \caption*{(b) Average depth}
\end{minipage}
}
\caption{(a) Variance and (b) average depth on the 6-class ndscape embedding. Variance is spatially clustered and follows the depth pattern more strongly than squared bias or squared probability loss}
\label{fig:variance-depth-6-class}
\end{figure}

\begin{figure}[t]
\centering
\makebox[\textwidth][c]{%
\begin{minipage}{0.52\textwidth}
    \centering
    \includegraphics[width=\linewidth]{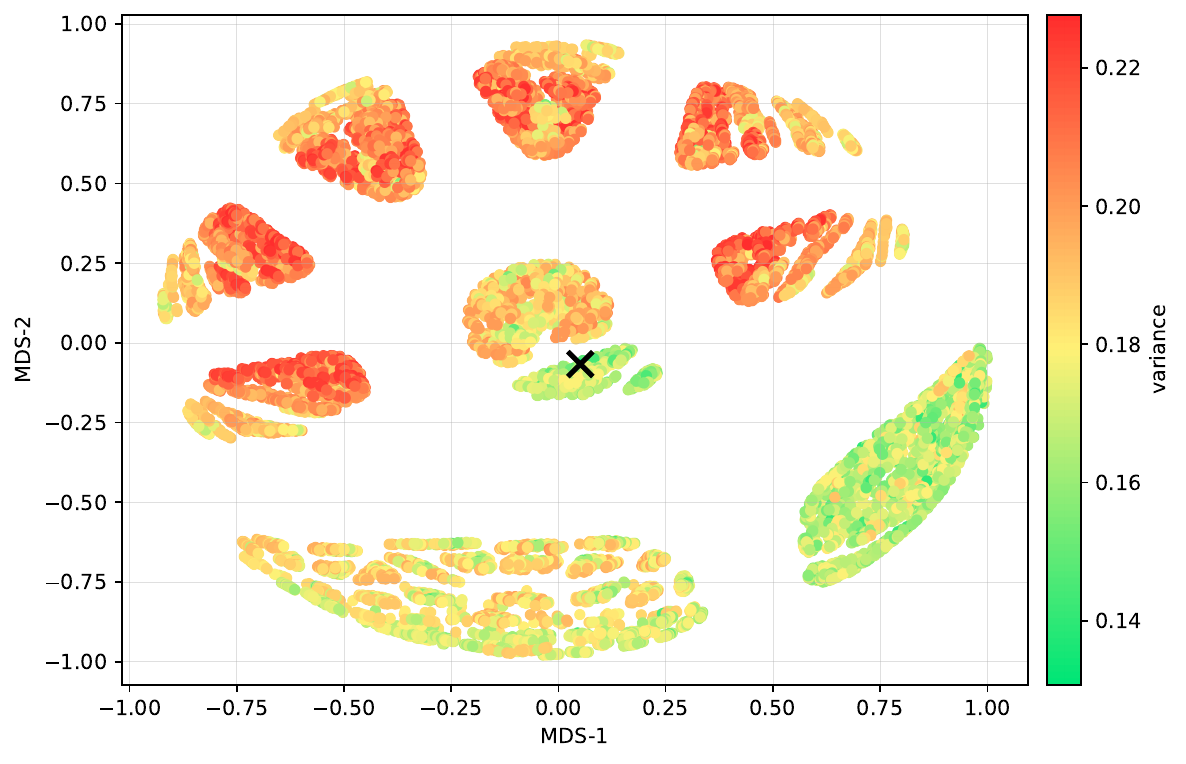}
    \caption*{(a) Variance}
\end{minipage}
\hspace{-0.02\textwidth}
\begin{minipage}{0.52\textwidth}
    \centering
    \includegraphics[width=\linewidth]{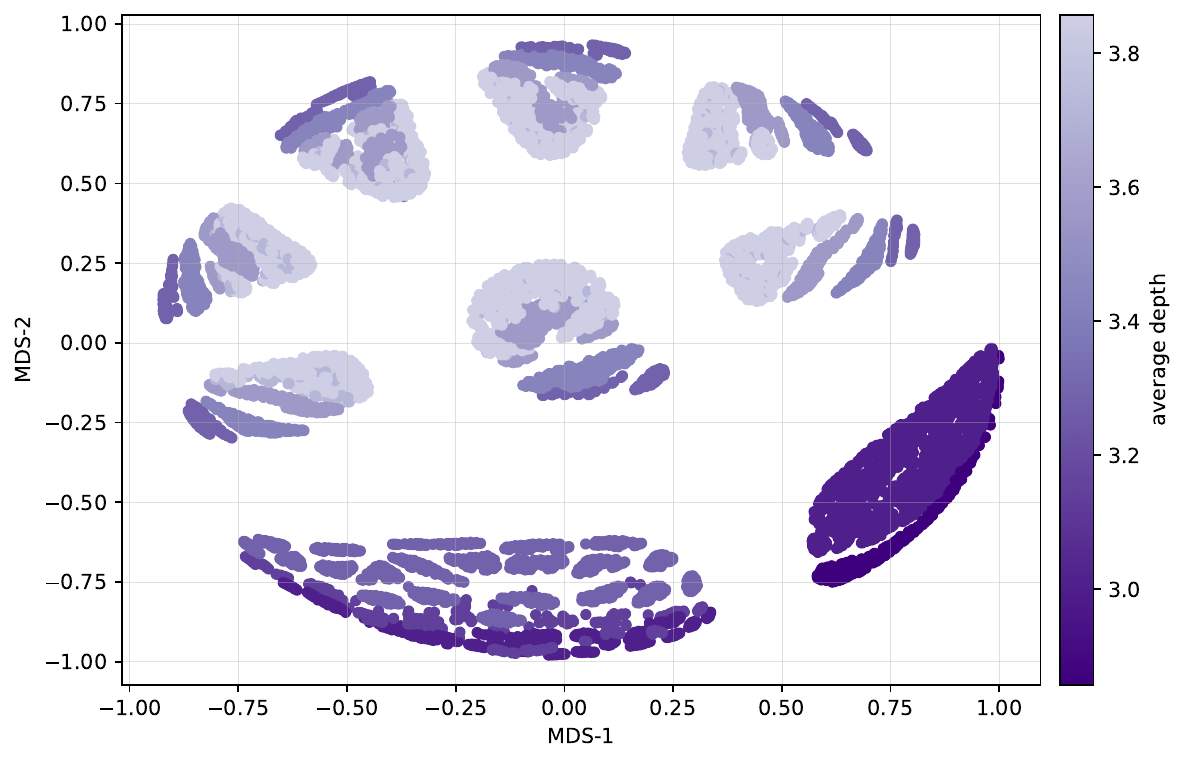}
    \caption*{(b) Average depth}
\end{minipage}
}
\caption{(a) Variance and (b) average depth on the 7-class ndscape embedding. The variance--depth association seen at 6 classes persists in the larger space of 10{,}395 structures}
\label{fig:variance-depth-7-class}
\end{figure}

The depth organisation of variance is specific to the logistic node model. With a decision tree at each node, squared bias and squared probability loss stay spatially organised, while variance is nearly flat and tracks squared bias rather than depth. Appendix~\ref{appendix:dt} gives the full comparison across three feature settings.

Interactive versions of all plots are available at \url{https://res-lucid.github.io/ndscape-plots/}.

\FloatBarrier
\section{Experiments on real data}\label{sec:real_data}

\subsection{Experimental procedure}\label{sec:real-protocol}

As the simulations use known generating NDs, squared bias and variance can be estimated directly. In real datasets, the true probabilities are unknown, so we cannot estimate squared bias or squared probability loss directly. Instead, we test whether held-out accuracy, log loss, and prediction variance are spatially organised in the ndscape embedding. We estimate prediction variance as the bootstrap variance of the fitted class probabilities across 100 resampled training sets, averaged over held-out points and classes.

The base learner is logistic regression as in Section~\ref{sec:simulation_design}, and Moran's $I$ is the same as in Section~\ref{sec:spatial_summaries}. Clustering is performed using $k$-means based on the silhouette coefficient, as in the simulation.

We run six construction heuristics (RPND, BoK, BBoK, ACND, CBND, NDC) on each dataset, using the same train and test splits throughout. We use fixed splits because the unit compared here is the ND structure, and changing the split would conflate resampling variation with structural variation. Each heuristic returns one ND. We tokenise that ND the same way as every other structure and place it at its embedding coordinate, so a heuristic's ND sits on the map alongside the evaluated trees and falls inside one of the clusters. BoK and BBoK each draw 10 candidate NDs and select the one with the lowest training error, a budget chosen to keep their runtime in line with the other heuristics.

The six datasets are drawn from the UCI Machine Learning Repository \citep{kelly2024uci} and span 6 to 19 classes (Table~\ref{tab:realdata-datasets}). Glass Identification \citep{glass_identification} and Steel Plates Faults \citep{steel_plates_faults} sit in the range where every ND can be enumerated. Mice Protein \citep{mice_protein}, Urban Land Cover \citep{urban_land_cover}, Pen-Based Digits \citep{pen-based_recognition_of_handwritten_digits_81}, and Soybean \citep{soybean_large} fall outside it, so we sample 15{,}000 structures per dataset using the sampler from Appendix~\ref{appendix:sampling}. For the four sampled datasets, we write ``best evaluated tree'' rather than ``best tree''.

\begin{table}[t]\centering
\caption{The six real datasets. All NDs are evaluated at 6 and 7 classes, and a 15{,}000-tree sample is used for the 8-, 9-, 10-, and 19-class datasets}
\label{tab:realdata-datasets}\small\renewcommand{\arraystretch}{1.28}
\setlength{\tabcolsep}{0pt}
\begin{tabular*}{\textwidth}{@{\extracolsep{\fill}}lccccc@{}}
\toprule
Dataset & Classes & Samples & Features & Split & Trees evaluated \\
\midrule
Glass Identification & 6  & 214    & 9   & 70/30 stratified & 945 \\
Steel Plates Faults & 7  & 1{,}941 & 27  & 70/30 stratified & 10{,}395 \\
Mice Protein & 8  & 1{,}080 & 77  & 70/30 stratified & 15{,}000 \\
Urban Land Cover & 9  & 675    & 147 & UCI predefined   & 15{,}000 \\
Pen-Based Digits  & 10 & 10{,}992 & 16  & 70/30 stratified & 15{,}000 \\
Soybean & 19 & 307 & 35  & 70/30 stratified & 15{,}000 \\
\bottomrule
\end{tabular*}
\end{table}

\subsection{Spatial organisation and heuristic placement}\label{sec:real-spatial}

Table~\ref{tab:realdata-spatial} reports Moran's $I$ for held-out accuracy, prediction variance, and log loss. Every entry is positive at $p \le 0.001$ across all three measures and all six datasets. An ND's held-out performance is therefore partly predicted by the performance of its neighbours.

Accuracy autocorrelation ranges from $0.165$ on Urban Land Cover to $0.448$ on Steel Plates Faults, and log loss shows a similar range. Prediction variance is also positively autocorrelated, with Moran's $I$ ranging from $0.165$ for Urban Land Cover to $0.461$ for Glass Identification. The variance proxy tracks accuracy more than depth. The bivariate Moran's $I$ between depth and variance is positive on the enumerated Glass and Steel sets ($0.12$ and $0.26$), matching the simulation, but negative on the four sampled sets, whose more accurate NDs tend to be deeper and, under resampling, more stable. With accuracy held fixed, the partial Spearman correlation between depth and variance is positive on Glass, Steel, and Pen-Based Digits, near zero on Mice Protein and Urban Land Cover, and negative only on Soybean.

Trust@20 (the trustworthiness of the 2D embedding at 20 nearest neighbours) is at least $0.98$ everywhere, meaning the two-dimensional map preserves the nearest-neighbour structure of the cosine distance. Urban Land Cover is weakest across all three measures, consistent with its flat cluster structure (Appendix~\ref{appendix:urban_land}). The Soybean dataset extends the embedding beyond the class counts considered in simulation, and even with 19 classes, the embedding still organises all three measures ($0.239$ to $0.364$). This suggests that the representation and sampling procedure remain informative beyond 10 classes.

\begin{table}[t]\centering
\caption{Spatial statistics across the six datasets. Under no spatial organisation, each Moran's $I$ has an expected value near zero. Trust@20 is the trustworthiness of the 2D embedding at 20 nearest neighbours. All Moran's $I$ have $p \le 0.001$}
\label{tab:realdata-spatial}\small\renewcommand{\arraystretch}{1.28}
\begin{tabular*}{\textwidth}{@{\extracolsep{\fill}}lccccc@{}}
\toprule
Dataset & Classes & Trust@20 & $I$ (accuracy) & $I$ (variance) & $I$ (log loss) \\
\midrule
Glass Identification         & 6  & 0.982 & 0.362 & 0.461 & 0.380 \\
Steel Plates Faults          & 7  & 0.988 & 0.448 & 0.332 & 0.421 \\
Mice Protein                 & 8  & 0.985 & 0.407 & 0.413 & 0.402 \\
Urban Land Cover             & 9  & 0.988 & 0.165 & 0.165 & 0.191 \\
Pen-Based Digits             & 10 & 0.985 & 0.323 & 0.179 & 0.347 \\
Soybean                      & 19 & 0.986 & 0.254 & 0.239 & 0.364 \\
\bottomrule
\end{tabular*}
\end{table}

We also map the structures selected by heuristics into this space. Under the fixed split and stated search budget, the heuristics do not find the best individual NDs, nor do they reach the best-performing cluster on any of the six datasets (Table~\ref{tab:realdata-heuristic}). The best heuristic region is still below the top region, with gaps ranging from 0.4 percentage points on Soybean to 5.9 on Glass.

\begin{table}[t]\centering
\caption{Embedding placement of construction heuristics on the real-data maps under fixed splits and stated search budgets.  The best heuristic is the highest-accuracy ND selected by any of the six heuristics. The best region reached is the embedding cluster containing that best heuristic ND, with rank 1 the highest-mean cluster. Region gap is the mean accuracy of the best cluster minus the mean accuracy of that heuristic's cluster (pp = percentage points)}
\label{tab:realdata-heuristic}\small\renewcommand{\arraystretch}{1.25}
\begin{tabular*}{\textwidth}{@{\extracolsep{\fill}}lclcc@{}}
\toprule
Dataset & $M$ & Best heuristic & Best region reached & Region gap (pp) \\
\midrule
Glass Identification & 6 & BBoK & 6 of 9 & 5.9 \\
Steel Plates Faults & 7 & BoK & 4 of 10 & 0.7 \\
Mice Protein & 8 & NDC & 19 of 19 & 4.1 \\
Urban Land Cover & 9 & NDC & 12 of 12 & 3.0 \\
Pen-Based Digits & 10 & BoK & 4 of 17 & 1.5 \\
Soybean & 19 & NDC & 8 of 27 & 0.4 \\
\bottomrule
\end{tabular*}
\end{table}

\subsection{Reading the map on Glass Identification}\label{sec:real-glass}

The Glass Identification dataset has 214 samples, 9 continuous chemical measurements, and 6 classes. Across 945 NDs, held-out accuracy ranges from $36.9\%$ to $63.1\%$, with a mean of $52.3\%$, so the structure alone moves accuracy by more than 26 points.

Figure~\ref{fig:glass} colours the embedding by held-out accuracy. The embedding contains 9 clusters, 6 with only NDs that isolate a single glass type at the root. Among these singleton-root families, the isolated class largely predicts the cluster's accuracy. Isolating ``building (float)" glass gives the best cluster, with a mean accuracy of $57.7\%$, and contains the three tied highest-accuracy NDs at $63.1\%$. Isolating ``building (non float)" glass gives the worst cluster, with a mean accuracy of $43.8\%$. The $13.9$-point gap between these two clusters is driven by only the top split.

The two building glass clusters are also the most compact. The ``building (non float)" family has a spread of only $\pm 0.9$ points, and the ``building (float)" family has a spread of $\pm 3.0$ points. The other singleton-root clusters have much wider spreads, from $\pm 6.0$ to $\pm 6.4$ points. In the other clusters, the root appears to have less effect, as the mean accuracies differ by less than their within-family spread. The map therefore gives one readable instruction: separate ``building (float)" from the rest at the root, and avoid separating ``building (non float)" first. An inspection of the features shows that ``building (float)" has higher \texttt{Mg} and \texttt{Al} than the other classes, while ``building (non float)" is less clearly separated by the available chemical variables.

However, none of the six heuristics selects an ND in this ``building (float)" cluster. Their NDs land in mixed-root clusters ranked 6th and 8th among the nine clusters. BBoK finds the best heuristic ND at $58.5\%$, but it comes from a region that averages only $51.8\%$. The embedding simplifies the choice of ND and gives insight into why a structure with ``building (float)" separated ought to be selected. 

\begin{figure}\centering
\includegraphics[width=1\linewidth]{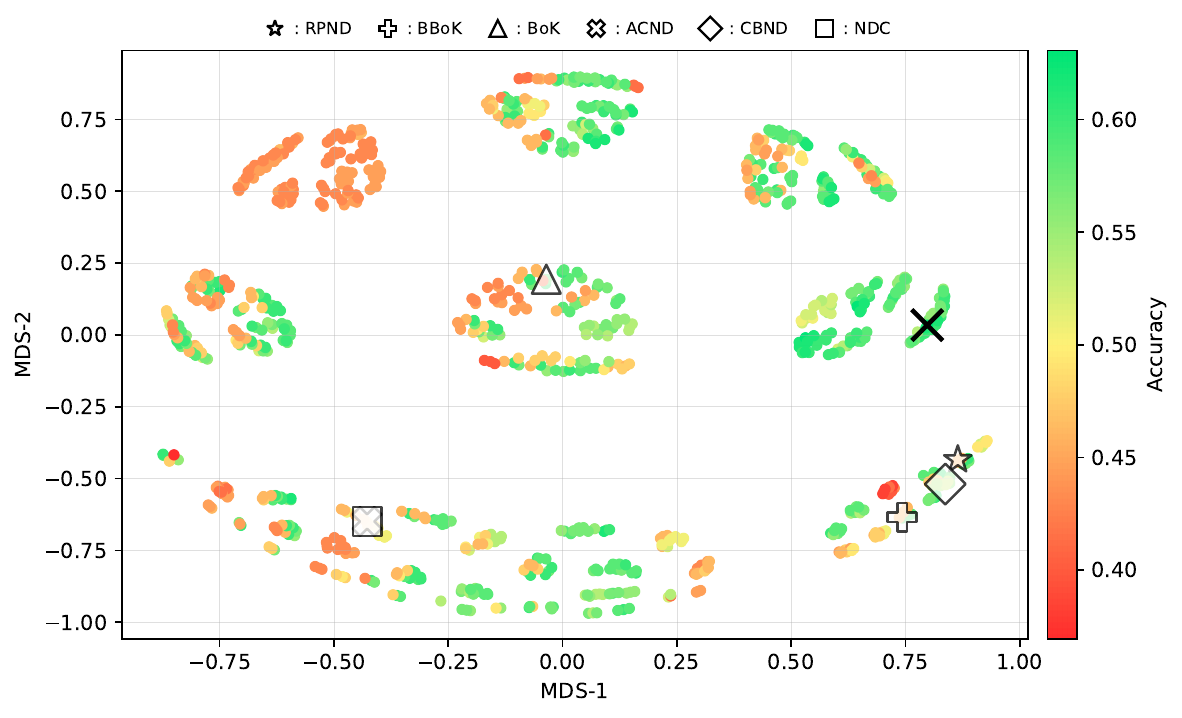}
\caption{Embedding of all 945 NDs for the Glass Identification dataset, coloured by held-out accuracy. Each point is an ND, positioned by MDS on the cosine distance between topology encodings. Marker accuracies are RPND ($46.2\%$), BoK ($56.9\%$), BBoK ($58.5\%$), ACND ($49.2\%$), CBND ($52.3\%$), and NDC ($49.2\%$). The solid cross marks the highest-accuracy ND ($63.1\%$)}
\label{fig:glass}
\end{figure}

\section{Discussion}

Across both simulated and real datasets, structurally similar NDs perform similarly, and the topology-only embedding makes this visible. Low squared-bias and low squared-loss structures concentrate near the generating structure, while variance follows a different pattern, driven mainly by tree depth. This section explains what the map shows, how real-world data extends the simulations, and how a practitioner might use the embedding.

\subsection{Where do the best and worst NDs sit on the map?}

Structures with low squared bias and low squared probability loss concentrate near the generating ND. At 6 and 7 classes, the structures nearest the generator capture a large proportion of the best ND structures. The layout for a class count is fixed, yet the low-bias and low-loss region moves with the generator (see Figures~\ref{fig:6-class-bias-four-generators} and~\ref{fig:seven-class-bias-plots}). In the real datasets, the highest-accuracy NDs sit among other high-performing structures.

Moran's $I$ shows the organisation reaches beyond the best ND. For squared bias, Moran's $I$ is positive across all class counts, from $0.342$ to $0.473$, and so is accuracy across all six real datasets, strongest on Steel Plates Faults ($0.448$) and weakest on Urban Land Cover ($0.165$). The consistent pattern is that good trees neighbour good trees, and poor trees neighbour poor trees.

\subsection{Why does this fixed embedding explain predictive behaviour?}

NDs with similar node partitions route probability mass in similar ways, and the top-level split has the greatest effect because every class probability estimate must pass through it. The tokeniser reflects this by weighting node partitions according to depth. The ablation against an unweighted alternative shows the value of these weights (see bag-of-words (BoW) in Tables~\ref{tab:ablation-1000-sparse}, \ref{tab:ablation-appendix-6}, and~\ref{tab:ablation-appendix-7}).

\subsection{How does variance behave?}

Variance is organised mainly by depth, with shorter root-to-leaf paths corresponding to lower variance, so flatter, more balanced trees tend to be more stable. We attribute this to the way an ND produces probability estimates. The probability that a new instance belongs to a class is the product of the left-or-right branch probabilities along the path from the root to its leaf. A shorter path multiplies fewer estimates, so it accumulates less of the sampling variation that each estimate carries. Class-balanced ND (CBND) produces balanced trees by design to reduce runtime \citep{Dong2005}, and our results suggest this also yields more stable trees on average.

Two things can weaken this depth organisation. The first is the node model. With a decision tree at each node rather than logistic regression, the variance surface is nearly flat, variance follows squared bias, and the depth association disappears (Appendix~\ref{appendix:dt}). The second is a confound in the four sampled real datasets, where deeper trees tend to be more accurate and, under resampling, more stable. That correlation turns the raw depth-variance association negative even though the base learner is unchanged, and holding accuracy fixed removes most of the reversal (Section~\ref{sec:real-spatial}). In every case variance stays spatially organised, even where its tie to depth slackens, so the tokenisation still captures how variance propagates through the ND.

\subsection{What do the real-data results add?}

The simulation tests the embedding under a known generating process, whereas the six real datasets test it under an unknown generating process. As in the simulations, spatial autocorrelation is positive for every measure on every dataset. This is most surprising on the 19-class Soybean dataset, where only 15{,}000 structures are sampled from a possible $2.2 \times 10^{20}$. Even at this class count, the map continues to organise the structure space.

The heuristics can be assessed within the embedding by comparing each selected tree with its surrounding region. Some selections come close to the best evaluated tree, including BoK on Pen-Based Digits, which is within $0.7$ points. The weakest selection, RPND on Glass, falls $16.9$ points short. Even so, no heuristic lands in the best-performing region on any dataset.

\subsection{How should the map be used?}

A practitioner should build the map once for the class count of interest, score the candidate structures, and colour the result by held-out accuracy, log loss, prediction variance, or any other metric of interest. Since neighbouring NDs behave alike, a sample is enough to locate the region of strong NDs even in spaces too large to fully enumerate. Rather than a single recommended ND, the map shows the effect of class groupings at different levels of the structure. For instance, on the Glass dataset, the embedding reduces a choice among 945 structures to a single design decision: to separate ``building (float)'' glass from the rest at the root. That decision states a property of the data rather than of any single model, and it can be checked against the features. 

Existing heuristics return an ND structure without accounting for how it compares with alternatives or which class groupings drive its performance. Placing the heuristics on the map adds that context by showing the region around each returned tree.

Interactive plots, such as those at \url{https://res-lucid.github.io/ndscape-plots/}, allow practitioners to inspect individual trees and analyse how particular node partitions suit their data. The same workflow can be applied to new datasets using the \texttt{ndscape} Python package, available at \url{https://github.com/res-lucid/ndscape} \citep{ndscape}. 

\section{Limitations and future work}\label{sec:future_work}

A limitation of the present study is that the main experiments use only logistic regression as the node model. The embedding is topology-only, but the performance overlaid on it depends on the fitted learner. The decision-tree sensitivity analysis in Appendix~\ref{appendix:dt} shows that the embedding still organises performance under a different node model, and that the depth association disappears once variance becomes flat. A broader comparison across node models and data-generating mechanisms would further test this, but the main simulation already fits 79.2 million NDs, which puts a full learner sweep beyond the present scope.

The real-data study also uses one held-out split per dataset. This was chosen to isolate the effect of tree structure, since every candidate ND is trained and evaluated on the same split. The results should not be read as cross-validated estimates of the absolute accuracy of any construction heuristic. They support the claim that, under a fixed data realisation, empirical performance is spatially organised and heuristic outputs can be placed relative to high-performing regions. Repeated splits would be the natural next step for benchmarking an embedding-guided search rule.

Another direction is to develop a search heuristic based on our findings. The real-data results show that the heuristics do not reach the best-performing regions. An embedding-guided search could identify the best cluster from a sample and then explore it in depth. We do not consider such an algorithm here. This paper maps the ND space and shows where heuristics fall short, while the development and benchmarking of an embedding-guided search are left to future work.

\section{Conclusion}\label{sec:conclusion}
This paper introduces a topology-only embedding of the ND structure space. Built from depth-weighted structural tokens before any model is fitted, the embedding places structurally similar trees close together and turns an abstract set of node partitions into a readable map. It reframes the choice of an ND as a comparison of class groupings rather than a single construction step, and supports what is, to our knowledge, the first bias-variance analysis across the space of ND structures.

In simulations with 6 to 10 classes and a known generating ND, squared bias, variance, and squared probability loss are each spatially organised, so a structure performs much like its neighbours. Squared bias and squared probability loss concentrate near the generating ND, whereas variance is organised by depth, so flatter, more balanced trees tend to be more stable under logistic regression, though not necessarily more accurate. Replacing the logistic node model with a decision tree preserves the bias and loss organisation but flattens the variance surface, indicating that depth-driven stability is a property of the logistic setting rather than of the embedding. The same pattern appears on six UCI datasets spanning 6 to 19 classes, where held-out accuracy, log loss, and bootstrap prediction variance are organised in the same way, and it holds even on the 19-class problem, whose structure space is far too large to enumerate.

Mapping the six standard construction heuristics (RPND, BoK, BBoK, ACND, CBND, NDC) onto this space shows that, although several select strong individual NDs, none reaches the best-performing region on any of the six datasets. The embedding, therefore, does more than identify a single good structure. It places every heuristic and every candidate on a common map, exposes the upper-level class groupings that separate high-performing and stable regions from the rest, and lets a practitioner check that structural choice against the features. With ndscape, researchers and practitioners can see where a structure sits among its alternatives and why it performs as it does.

\section*{Acknowledgements}

This research was supported by use of the Nectar Research Cloud, a collaborative Australian research platform supported by the National Collaborative Research Infrastructure Strategy (NCRIS)-funded Australian Research Data Commons (ARDC). This work was supported by resources provided by the Pawsey Supercomputing Research Centre with funding from the Australian Government and the Government of Western Australia.

\section*{Statements and Declarations} 
\subsection*{Funding} 
This research was supported by an Australian Government Research Training Program (RTP) Scholarship, \href{https://doi.org/10.82133/C42F-K220}{https://doi.org/10.82133/C42F-K220}. 

\subsection*{Competing interests} 
The authors have no competing interests to declare that are relevant to the content of this article. 

\subsection*{Ethics approval and consent to participate}
Not applicable.

\subsection*{Data and code availability} 
The six real datasets analysed in this study are publicly available from the UCI Machine Learning Repository \citep{kelly2024uci}: Glass Identification \citep{glass_identification}, Steel Plates Faults \citep{steel_plates_faults}, Mice Protein Expression \citep{mice_protein}, Urban Land Cover \citep{urban_land_cover}, Pen-Based Recognition of Handwritten Digits \citep{pen-based_recognition_of_handwritten_digits_81}, and Soybean (Large) \citep{soybean_large}. No new real-world datasets were collected. The simulated datasets are generated by the data-generating process described in Section~\ref{sec:simulation_design} and Appendix~\ref{appendix:feature_gen}, and can be regenerated from the released code. The code used to construct the embeddings, run the simulations, analyse the results, and generate the interactive maps is available in the open-source \texttt{ndscape} package at \url{https://github.com/res-lucid/ndscape} \citep{ndscape}. The interactive plots are available at \url{https://res-lucid.github.io/ndscape-plots/}.

\bibliographystyle{unsrtnat}

\begin{thebibliography}{34}
\providecommand{\natexlab}[1]{#1}
\providecommand{\url}[1]{\texttt{#1}}
\expandafter\ifx\csname urlstyle\endcsname\relax
  \providecommand{\doi}[1]{doi: #1}\else
  \providecommand{\doi}{doi: \begingroup \urlstyle{rm}\Url}\fi

\bibitem[Frank and Kramer(2004)]{FrankEibe2004Eond}
Eibe Frank and Stefan Kramer.
\newblock Ensembles of nested dichotomies for multi-class problems.
\newblock In \emph{Proceedings of the Twenty-First International Conference on
  Machine Learning}, pages 305--312, 2004.
\newblock \doi{10.1145/1015330.1015363}.
\newblock \url{https://doi.org/10.1145/1015330.1015363}.

\bibitem[Fdez-Sánchez et~al.(2021)Fdez-Sánchez, Pascual-Triana, Fernández,
  and Herrera]{FdezSanchez2021}
J.A. Fdez-Sánchez, J.D. Pascual-Triana, A.~Fernández, and F.~Herrera.
\newblock Learning interpretable multi-class models by means of hierarchical
  decomposition: Threshold control for nested dichotomies.
\newblock \emph{Neurocomputing}, 463:\penalty0 514--524, 2021.
\newblock \doi{10.1016/j.neucom.2021.07.097}.

\bibitem[Rodr{\'\i}guez et~al.(2010)Rodr{\'\i}guez, Garc{\'\i}a-Osorio, and
  Maudes]{rodriguez2010forests}
Juan~J Rodr{\'\i}guez, C{\'e}sar Garc{\'\i}a-Osorio, and Jes{\'u}s Maudes.
\newblock Forests of nested dichotomies.
\newblock \emph{Pattern Recognition Letters}, 31\penalty0 (2):\penalty0
  125--132, 2010.
\newblock \doi{10.1016/j.patrec.2009.09.015}.

\bibitem[Wever et~al.(2018)Wever, Mohr, and H{\"u}llermeier]{Wever2018}
Marcel Wever, Felix Mohr, and Eyke H{\"u}llermeier.
\newblock Ensembles of evolved nested dichotomies for classification.
\newblock In \emph{Proceedings of the Genetic and Evolutionary Computation
  Conference}, GECCO '18, pages 561--568, New York, NY, USA, 2018. Association
  for Computing Machinery.
\newblock \doi{10.1145/3205455.3205562}.
\newblock \url{https://doi.org/10.1145/3205455.3205562}.

\bibitem[Wever et~al.(2023)Wever, {\"O}zdogan, and
  H{\"u}llermeier]{evolutionary_ensembling2023}
Marcel Wever, Miran {\"O}zdogan, and Eyke H{\"u}llermeier.
\newblock Cooperative co-evolution for ensembles of nested dichotomies for
  multi-class classification.
\newblock In \emph{Proceedings of the Genetic and Evolutionary Computation
  Conference}, GECCO '23, page 597–605, New York, NY, USA, 2023. Association
  for Computing Machinery.
\newblock \doi{10.1145/3583131.3590457}.
\newblock \url{https://doi.org/10.1145/3583131.3590457}.

\bibitem[Dong et~al.(2005)Dong, Frank, and Kramer]{Dong2005}
Lin Dong, Eibe Frank, and Stefan Kramer.
\newblock Ensembles of balanced nested dichotomies for multi-class problems.
\newblock In Al{\'i}pio~M{\'a}rio Jorge, Lu{\'i}s Torgo, Pavel Brazdil, Rui
  Camacho, and Jo{\~a}o Gama, editors, \emph{Knowledge Discovery in Databases:
  PKDD 2005}, volume 3721 of \emph{Lecture Notes in Computer Science}, pages
  84--95. Springer Berlin Heidelberg, Berlin, Heidelberg, 2005.
\newblock \doi{10.1007/11564126\_13}.
\newblock \url{https://doi.org/10.1007/11564126_13}.

\bibitem[Duarte-Villase{\~n}or et~al.(2012)Duarte-Villase{\~n}or,
  Carrasco-Ochoa, Mart{\'\i}nez-Trinidad, and Flores-Garrido]{duarte2012nested}
Miriam~M{\'o}nica Duarte-Villase{\~n}or, Jes{\'u}s~Ariel Carrasco-Ochoa,
  Jos{\'e}~Francisco Mart{\'\i}nez-Trinidad, and Marisol Flores-Garrido.
\newblock Nested dichotomies based on clustering.
\newblock In Luis Alvarez, Marta Mejail, Luis Gomez, and Julio Jacobo, editors,
  \emph{Progress in Pattern Recognition, Image Analysis, Computer Vision, and
  Applications}, volume 7441 of \emph{Lecture Notes in Computer Science}, pages
  162--169, Berlin, Heidelberg, 2012. Springer.
\newblock \doi{10.1007/978-3-642-33275-3\_20}.
\newblock \url{https://doi.org/10.1007/978-3-642-33275-3_20}.

\bibitem[Leathart et~al.(2016)Leathart, Pfahringer, and Frank]{Leathart2016}
Tim Leathart, Bernhard Pfahringer, and Eibe Frank.
\newblock Building ensembles of adaptive nested dichotomies with random-pair
  selection.
\newblock In Paolo Frasconi, Niels Landwehr, Giuseppe Manco, and Jilles
  Vreeken, editors, \emph{Machine Learning and Knowledge Discovery in
  Databases}, volume 9852 of \emph{Lecture Notes in Artificial Intelligence},
  pages 179--194, Cham, 2016. Springer.
\newblock \doi{10.1007/978-3-319-46227-1\_12}.
\newblock \url{https://doi.org/10.1007/978-3-319-46227-1_12}.

\bibitem[Yang et~al.(2017)Yang, Destercke, and Masson]{Yang2017}
Gen Yang, Sébastien Destercke, and Marie-H{\'e}l{\`e}ne Masson.
\newblock Cautious classification with nested dichotomies and imprecise
  probabilities.
\newblock \emph{Soft Computing}, 21\penalty0 (24):\penalty0 7447--7462, 2017.
\newblock \doi{10.1007/s00500-016-2287-7}.

\bibitem[Melnikov and H{\"u}llermeier(2018)]{Melnikov2018}
Vitalik Melnikov and Eyke H{\"u}llermeier.
\newblock On the effectiveness of heuristics for learning nested dichotomies:
  an empirical analysis.
\newblock \emph{Machine Learning}, 107:\penalty0 1537--1560, 2018.
\newblock \doi{10.1007/s10994-018-5733-1}.

\bibitem[Dix-Matthews(2026)]{ndscape}
Maxwell Dix-Matthews.
\newblock ndscape.
\newblock GitHub repository, 2026.
\newblock URL \url{https://github.com/res-lucid/ndscape}.

\bibitem[Geman et~al.(1992)Geman, Bienenstock, and Doursat]{geman1992}
Stuart Geman, Elie Bienenstock, and René Doursat.
\newblock Neural networks and the bias/variance dilemma.
\newblock \emph{Neural Computation}, 4:\penalty0 1--58, 01 1992.
\newblock \doi{10.1162/neco.1992.4.1.1}.

\bibitem[Kohavi(1995)]{kohavi1995study}
Ron Kohavi.
\newblock A study of cross-validation and bootstrap for accuracy estimation and
  model selection.
\newblock In \emph{Proceedings of the 14th International Joint Conference on
  Artificial Intelligence}, volume~2 of \emph{IJCAI'95}, page 1137–1143, San
  Francisco, CA, USA, 1995. Morgan Kaufmann Publishers Inc.
\newblock \doi{10.5555/1643031.1643047}.
\newblock \url{https://doi.org/10.5555/1643031.1643047}.

\bibitem[Hastie et~al.(2009)Hastie, Tibshirani, and
  Friedman]{hastie2009elements}
T.~Hastie, R.~Tibshirani, and J.~Friedman.
\newblock \emph{The Elements of Statistical Learning: Data Mining, Inference,
  and Prediction, Second Edition}.
\newblock Springer Series in Statistics. Springer, New York, NY, 2009.
\newblock \doi{10.1007/978-0-387-84858-7}.

\bibitem[Domingos(2000)]{domingos2000unified}
Pedro~M. Domingos.
\newblock A unified bias-variance decomposition and its applications.
\newblock In Pat Langley, editor, \emph{Proceedings of the Seventeenth
  International Conference on Machine Learning}, pages 231--238, San Francisco,
  CA, 2000. Morgan Kaufmann.
\newblock \doi{10.5555/645529.657784}.
\newblock \url{https://doi.org/10.5555/645529.657784}.

\bibitem[Salton et~al.(1975)Salton, Yang, and Yu]{Salton1975}
G.~Salton, C.~S. Yang, and C.~T. Yu.
\newblock A theory of term importance in automatic text analysis.
\newblock \emph{Journal of the American Society for Information Science},
  26\penalty0 (1):\penalty0 33--44, 1975.
\newblock \doi{10.1002/asi.4630260106}.

\bibitem[Salton and Buckley(1988)]{Salton1988}
Gerard Salton and Christopher Buckley.
\newblock Term-weighting approaches in automatic text retrieval.
\newblock \emph{Information Processing \& Management}, 24\penalty0
  (5):\penalty0 513--523, 1988.
\newblock \doi{10.1016/0306-4573(88)90021-0}.

\bibitem[Torgerson(1952)]{Torgerson1952}
Warren~S. Torgerson.
\newblock Multidimensional scaling: I. theory and method.
\newblock \emph{Psychometrika}, 17\penalty0 (4):\penalty0 401--419, 1952.
\newblock \doi{10.1007/BF02288916}.

\bibitem[Moran(1950)]{Moran1950}
P.~A.~P. Moran.
\newblock Notes on continuous stochastic phenomena.
\newblock \emph{Biometrika}, 37\penalty0 (1-2):\penalty0 17--23, 1950.
\newblock \doi{10.1093/biomet/37.1-2.17}.

\bibitem[Pedregosa et~al.(2011)Pedregosa, Varoquaux, Gramfort, Michel, Thirion,
  Grisel, Blondel, Prettenhofer, Weiss, Dubourg, Vanderplas, Passos,
  Cournapeau, Brucher, Perrot, and Duchesnay]{scikit-learn}
F.~Pedregosa, G.~Varoquaux, A.~Gramfort, V.~Michel, B.~Thirion, O.~Grisel,
  M.~Blondel, P.~Prettenhofer, R.~Weiss, V.~Dubourg, J.~Vanderplas, A.~Passos,
  D.~Cournapeau, M.~Brucher, M.~Perrot, and E.~Duchesnay.
\newblock Scikit-learn: machine learning in {P}ython.
\newblock \emph{Journal of Machine Learning Research}, 12:\penalty0 2825--2830,
  2011.
\newblock \doi{10.5555/1953048.2078195}.

\bibitem[Kruskal(1964)]{kruskal1964}
J.~B. Kruskal.
\newblock Multidimensional scaling by optimizing goodness of fit to a nonmetric
  hypothesis.
\newblock \emph{Psychometrika}, 29\penalty0 (1):\penalty0 1--27, 1964.
\newblock \doi{10.1007/BF02289565}.

\bibitem[Friedman(1997)]{friedman1997bias}
Jerome Friedman.
\newblock On bias, variance, 0/1—loss, and the curse-of-dimensionality.
\newblock \emph{Data Mining and Knowledge Discovery}, 1\penalty0 (1):\penalty0
  55--77, 03 1997.
\newblock \doi{10.1023/A:1009778005914}.

\bibitem[Morris et~al.(2019)Morris, White, and Crowther]{morris2019using}
Tim~P. Morris, Ian~R. White, and Michael~J. Crowther.
\newblock Using simulation studies to evaluate statistical methods.
\newblock \emph{Statistics in Medicine}, 38\penalty0 (11):\penalty0 2074--2102,
  2019.
\newblock \doi{10.1002/sim.8086}.

\bibitem[Efron(1981)]{Efron1981}
Bradley Efron.
\newblock Nonparametric estimates of standard error: The jackknife, the
  bootstrap and other methods.
\newblock \emph{Biometrika}, 68\penalty0 (3):\penalty0 589--599, 1981.
\newblock \doi{10.1093/biomet/68.3.589}.

\bibitem[Rousseeuw(1987)]{rousseeuw1987}
Peter~J. Rousseeuw.
\newblock Silhouettes: a graphical aid to the interpretation and validation of
  cluster analysis.
\newblock \emph{Journal of computational and applied mathematics}, 20:\penalty0
  53--65, 1987.
\newblock \doi{10.1016/0377-0427(87)90125-7}.

\bibitem[Robinson and Foulds(1981)]{Robinson1981}
D.F. Robinson and L.R. Foulds.
\newblock Comparison of phylogenetic trees.
\newblock \emph{Mathematical Biosciences}, 53\penalty0 (1):\penalty0 131--147,
  1981.
\newblock \doi{10.1016/0025-5564(81)90043-2}.

\bibitem[Kelly et~al.(2024)Kelly, Longjohn, and Nottingham]{kelly2024uci}
Markelle Kelly, Rachel Longjohn, and Kolby Nottingham.
\newblock {The UCI Machine Learning Repository}.
\newblock \url{https://archive.ics.uci.edu}, 2024.

\bibitem[German(1987)]{glass_identification}
B.~German.
\newblock Glass identification, 1987.
\newblock \url{https://doi.org/10.24432/C5WW2P}.

\bibitem[Buscema et~al.(2010)Buscema, Terzi, and Tastle]{steel_plates_faults}
M~Buscema, S~Terzi, and W~Tastle.
\newblock {Steel Plates Faults}.
\newblock UCI Machine Learning Repository, 2010.
\newblock \url{https://doi.org/10.24432/C5J88N}.

\bibitem[Higuera et~al.(2015)Higuera, Gardiner, and Cios]{mice_protein}
Clara Higuera, Katheleen Gardiner, and Krzysztof Cios.
\newblock {Mice Protein Expression}.
\newblock UCI Machine Learning Repository, 2015.
\newblock \url{https://doi.org/10.24432/C50S3Z}.

\bibitem[Johnson(2013)]{urban_land_cover}
Brian Johnson.
\newblock {Urban Land Cover}.
\newblock UCI Machine Learning Repository, 2013.
\newblock \url{https://doi.org/10.24432/C53S48}.

\bibitem[Alpaydin and
  Alimoglu(1996)]{pen-based_recognition_of_handwritten_digits_81}
E.~Alpaydin and Fevzi Alimoglu.
\newblock {Pen-Based Recognition of Handwritten Digits}.
\newblock UCI Machine Learning Repository, 1996.
\newblock \url{https://doi.org/10.24432/C5MG6K}.

\bibitem[Michalski and Chilausky(1980)]{soybean_large}
R.S. Michalski and R.L. Chilausky.
\newblock {Soybean (Large)}.
\newblock UCI Machine Learning Repository, 1980.
\newblock \url{https://doi.org/10.24432/C5JG6Z}.

\bibitem[Chipman et~al.(1998)Chipman, George, and McCulloch]{Chipman1998}
Hugh~A. Chipman, Edward~I. George, and Robert~E. McCulloch.
\newblock Bayesian {CART} model search.
\newblock \emph{Journal of the American Statistical Association}, 93\penalty0
  (443):\penalty0 935--948, 1998.
\newblock \doi{10.1080/01621459.1998.10473750}.

\end{thebibliography}

\clearpage
\appendix
\section{Bias and variance proof}\label{appendix:bias_variance}
\begin{proposition}
Fix a class \(m\) and an input point \(x\). Let \(p_m(x)\) denote the true probability of class \(m\) at \(x\), and let \(\hat p_m(x)\) denote the fitted probability for class \(m\) produced by a model trained on a random training set. Then the expected squared error for class \(m\) decomposes into a squared bias term and a variance term
\begin{equation}
\mathbb E\!\left[\left(\hat p_m(x)-p_m(x)\right)^2\right]
=
\left(\mathbb E[\hat p_m(x)]-p_m(x)\right)^2
+
\mathbb E\!\left[\left(\hat p_m(x)-\mathbb E[\hat p_m(x)]\right)^2\right].
\end{equation}
\end{proposition}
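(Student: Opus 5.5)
The plan is the standard add-and-subtract argument. Throughout, \(x\) and \(m\) are fixed, so \(p_m(x)\) is a deterministic constant. The only randomness is in \(\hat p_m(x)\), through the random training set. Write \(\mu = \mathbb E[\hat p_m(x)]\). This expectation exists and is finite because \(\hat p_m(x)\in[0,1]\) is a bounded random variable. For the same reason every expectation in the statement is finite, so linearity of expectation can be applied freely.

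First, insert and subtract \(\mu\) inside the error:
\[
\hat p_m(x)-p_m(x) = \bigl(\hat p_m(x)-\mu\bigr) + \bigl(\mu - p_m(x)\bigr).
\]
Squaring gives three terms: \(\bigl(\hat p_m(x)-\mu\bigr)^2\), the constant \(\bigl(\mu-p_m(x)\bigr)^2\), and the cross term \(2\bigl(\mu-p_m(x)\bigr)\bigl(\hat p_m(x)-\mu\bigr)\).

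Next, take expectations term by term. The first term yields \(\mathbb E\bigl[(\hat p_m(x)-\mu)^2\bigr]\), which is exactly the variance term of the statement. The second term is non-random, so its expectation is itself, namely the squared bias. For the cross term, the factor \(2(\mu-p_m(x))\) is a constant and comes outside the expectation, leaving \(\mathbb E[\hat p_m(x)]-\mu = 0\). Adding the three pieces gives the claimed identity.

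The only point needing care, and hence the main (mild) obstacle, is the cross term. Its vanishing relies on \(p_m(x)\) being non-random with respect to the training-set distribution, which holds because \(x\) is fixed and the truth does not depend on the sample. It also relies on integrability, which is guaranteed by boundedness. Once these are stated explicitly, the remaining computation is routine algebra.
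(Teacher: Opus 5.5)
Your proposal is correct and matches the paper's proof step for step: add and subtract $\mathbb E[\hat p_m(x)]$, expand the square, and note that the cross term vanishes because $p_m(x)$ is a constant and $\mathbb E\bigl[\hat p_m(x)-\mathbb E[\hat p_m(x)]\bigr]=0$. Your explicit remarks that $\hat p_m(x)\in[0,1]$ guarantees integrability, and that $p_m(x)$ is non-random, are left implicit in the paper but do not change the argument.
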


\begin{proof}

To decompose the error, add and subtract the mean fitted probability \(\mathbb E[\hat p_m(x)]\):
\[
\hat p_m(x)-p_m(x)
=
\bigl(\hat p_m(x)-\mathbb E[\hat p_m(x)]\bigr)
+
\bigl(\mathbb E[\hat p_m(x)]-p_m(x)\bigr).
\]
Squaring both sides gives
\begin{align}
\left(\hat p_m(x)-p_m(x)\right)^2
&=
\left(\hat p_m(x)-\mathbb E[\hat p_m(x)]\right)^2
\notag\\
&\quad+
2\bigl(\hat p_m(x)-\mathbb E[\hat p_m(x)]\bigr)
\bigl(\mathbb E[\hat p_m(x)]-p_m(x)\bigr)
\notag\\
&\quad+
\left(\mathbb E[\hat p_m(x)]-p_m(x)\right)^2.
\end{align}
Now take expectation:
\begin{align}
\mathbb E\!\left[\left(\hat p_m(x)-p_m(x)\right)^2\right]
&=
\mathbb E\!\left[
\left(\hat p_m(x)-\mathbb E[\hat p_m(x)]\right)^2
\right]
\notag\\
&\quad+
2\bigl(\mathbb E[\hat p_m(x)]-p_m(x)\bigr)
\mathbb E\!\left[
\hat p_m(x)-\mathbb E[\hat p_m(x)]
\right]
\notag\\
&\quad+
\left(\mathbb E[\hat p_m(x)]-p_m(x)\right)^2.
\end{align}
The middle term is zero because
\[
\mathbb E\!\left[
\hat p_m(x)-\mathbb E[\hat p_m(x)]
\right]
=
\mathbb E[\hat p_m(x)]-\mathbb E[\hat p_m(x)]
=
0.
\]
Therefore,
\[
\mathbb E\!\left[\left(\hat p_m(x)-p_m(x)\right)^2\right]
=
\left(\mathbb E[\hat p_m(x)]-p_m(x)\right)^2
+
\mathbb E\!\left[\left(\hat p_m(x)-\mathbb E[\hat p_m(x)]\right)^2\right],
\]
which proves the result.
\end{proof}
\FloatBarrier
\clearpage
\section{Monte Carlo error diagnostics}\label{appendix:mc_error}

Figure~\ref{fig:6-class_repitition} shows the Monte Carlo error for square probability loss, squared bias, and variance across different numbers of simulation repetitions in the 6-class setting. Figure~\ref{fig:7-class_repitition} reports the corresponding results for the 7-class setting.

\begin{figure}[!htbp]
    \centering
    \includegraphics[width=1.05\linewidth]{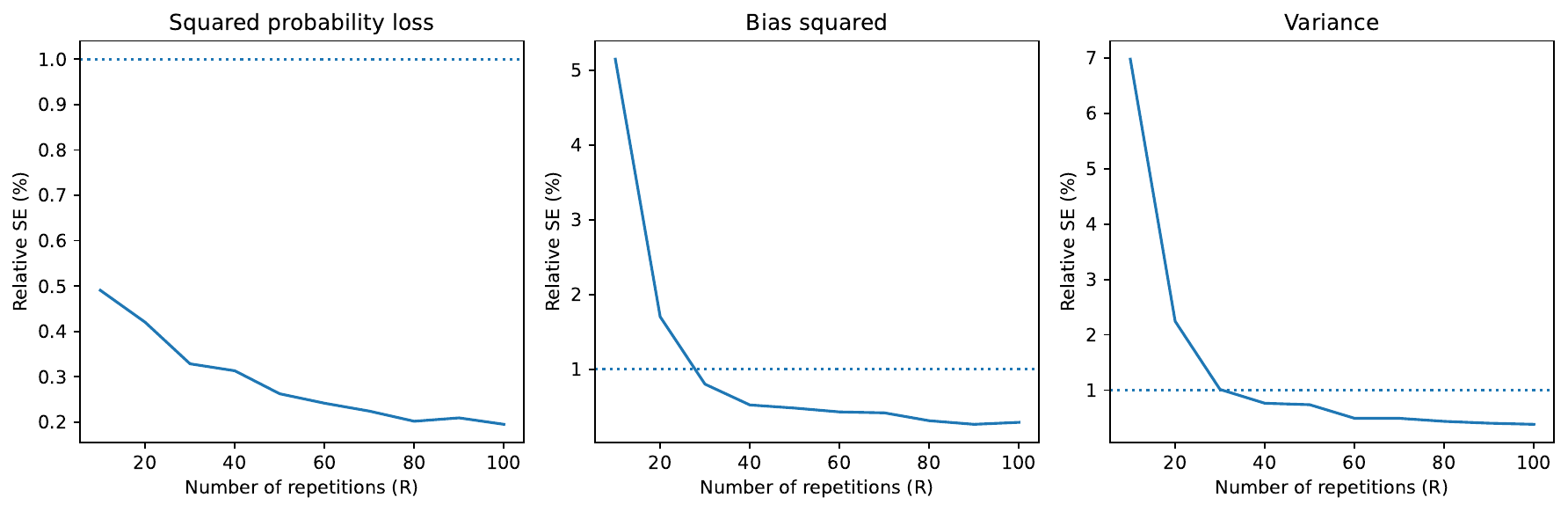}
    \caption{Relative Monte Carlo error against the number of repetitions \(R\) for a 6-class pilot dataset. The panels show squared probability loss, squared bias, and variance. The horizontal dotted line marks the 1\% threshold}
    \label{fig:6-class_repitition}
\end{figure}

\begin{figure}[!htbp]
    \centering
    \includegraphics[width=1.05\linewidth]{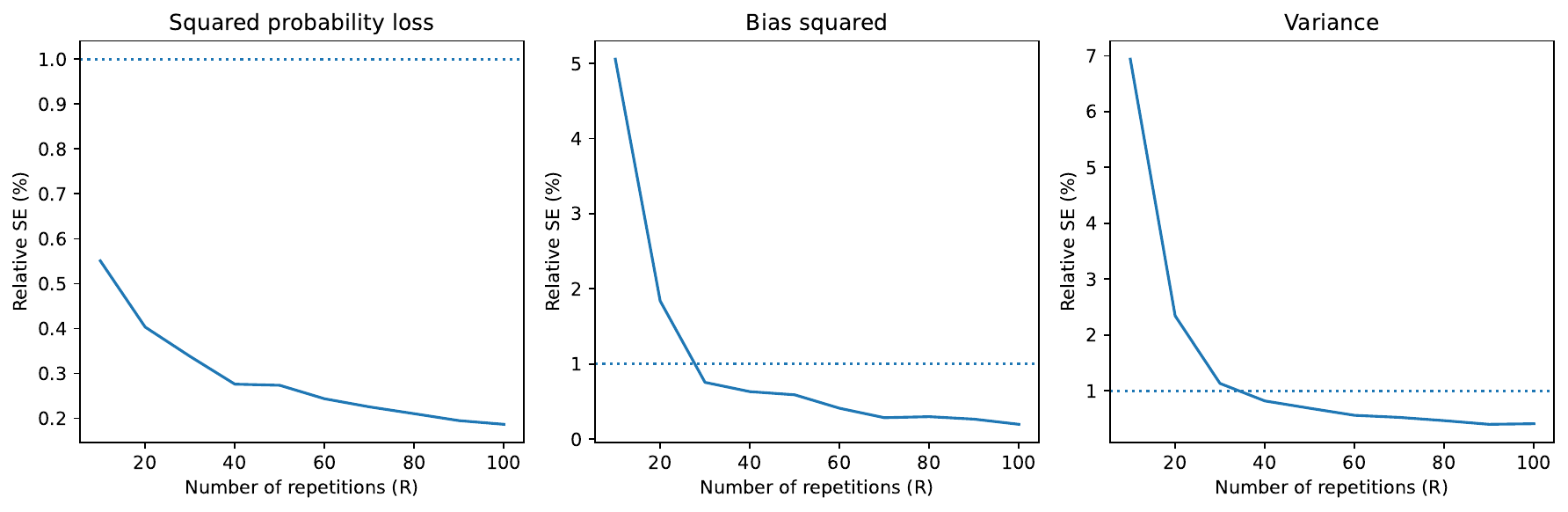}
    \caption{Relative Monte Carlo error against the number of repetitions \(R\) for a 7-class pilot dataset. The panels show squared probability loss, squared bias, and variance. The horizontal dotted line marks the 1\% threshold}
    \label{fig:7-class_repitition}
\end{figure}

\FloatBarrier
\clearpage
\section{Sampling for more than seven classes}\label{appendix:sampling}
The sampler builds each ND top-down. At every internal node, it allocates the remaining budget across possible split sizes in proportion to the number of structures of that size that can be formed from the remaining classes, then recurses on the two children. The allocation is proportional to $\binom{m}{k} W(k) W(m - k)$ (with a factor of $1/2$ when $k = m/2$), where $W(\cdot)$ is the count of rooted binary trees on a given number of leaves. To leading order, this approximates uniform sampling over rooted binary trees and is therefore not biased toward balanced upper-level splits, with deviations introduced only by integer-budget allocation at small budgets.

\section{Decision tree simulation}\label{appendix:dt}
\subsection{The data-generating ND}\label{appendix:target_gen}

Each simulation fixes one generating ND over the classes ${1,\dots,M}$. The ND structure defines class splits, with each internal split being an instance of a binary model. In the decision-tree simulations, this model is a probabilistic CART model. As in a classification tree with terminal-node response parameters \citep{Chipman1998}, a sample is routed through threshold splits to a terminal node, and that terminal node supplies a Bernoulli probability.

For an internal ND node $v$, let $\ell_v(x)$ be the terminal node reached by $x$ in its split tree. The probability of taking the right branch at $v$ is
\[
q_v(x)=p_{v,\ell_v(x)}.
\]
The left branch receives probability $1-q_v(x)$. The class probabilities $p_m(x)$ are then obtained by the same ND recursion used for the logistic data-generating process, multiplying the branch probabilities along the path to class $m$.

Each split tree is fixed when the generating ND is created. Its internal nodes store feature-threshold tests, and its leaves store fixed probabilities
\[
p_{v,\ell}=
\frac{1}{1+e^{-(\beta_v+u_{v,\ell})}},
\qquad
u_{v,\ell}\sim\mathcal{N}(0,\sigma^2).
\]
The intercept $\beta_v$ is calibrated so that the expected right-branch probability matches the relative size of the right class set at that ND split. This targets a roughly uniform marginal class distribution. Generators with extreme calibrated intercepts are rejected, and generated datasets are filtered to avoid very rare or dominant realised classes.

The split trees are part of the data-generating mechanism and are held fixed across repetitions. They are separate from the base learner fitted during evaluation. The fitted model only sees the generated training data.

\subsection{Decision tree results}\label{appendix:dt_results}

The main study fits logistic regression at every node. To test whether the spatial organisation depends on the node model, we repeat the 6-class simulation with a decision tree at each node, using data generated by the CART model above, across the three feature settings in Appendix~\ref{appendix:ablation}. The 50 generating ND structures, the embedding, and the $R = 100$ repetitions are unchanged. The decision tree is \texttt{scikit-learn}'s \texttt{DecisionTreeClassifier} with the entropy criterion, \texttt{min\_samples\_leaf = 5}, \texttt{random\_state = 0}, and no maximum depth \citep{scikit-learn}.

The embedding still orders performance (Table~\ref{tab:dt-simulation}). Squared bias, squared probability loss, and variance all have positive Moran's $I$ in every feature setting, between $0.30$ and $0.37$ and are significant at $p \le 0.001$, so neighbouring trees resemble one another in all three.

\begin{table}[!htbp]
\centering
\caption{Spatial diagnostics for the decision-tree setting at 6 classes, across the three feature settings of Appendix~\ref{appendix:feature_gen}, over the same 50 generators as the main study. $I$ is univariate Moran's $I$, $I_{xy}$ is the bivariate Moran's $I$ between variance and depth, and $r^2$ is the squared Pearson correlation between per-ND squared bias and variance. Values are mean $\pm$ SD, and every $I$ is significant at $p \le 0.001$}
\label{tab:dt-simulation}
\small
\renewcommand{\arraystretch}{1.25}
\begin{tabular*}{\textwidth}{@{\extracolsep{\fill}}lcccc@{}}
\toprule
Setting & $I$ (bias$^2$) & $I$ (var) & $I_{xy}$ (var, depth) & $r^2$(bias$^2$, var) \\
\midrule
10-feat (dense)    & \(0.300 \pm 0.135\) & \(0.356 \pm 0.150\) & \(-0.270 \pm 0.133\) & \(0.882 \pm 0.054\) \\
1000-feat (sparse) & \(0.309 \pm 0.148\) & \(0.365 \pm 0.182\) & \(-0.197 \pm 0.160\) & \(0.888 \pm 0.072\) \\
1000-feat (dense)  & \(0.296 \pm 0.145\) & \(0.357 \pm 0.177\) & \(-0.194 \pm 0.189\) & \(0.892 \pm 0.066\) \\
\bottomrule
\end{tabular*}
\end{table}

Variance behaves differently than under logistic regression. It is nearly flat, sitting near $0.64$ in the 1000-feature setting and varying by only about one per cent of that across the embedding, compared with about eight per cent for squared bias. The depth gradient that organises variance under logistic regression is gone, with the variance--depth bivariate Moran's $I$ turning negative (Table~\ref{tab:dt-simulation}, against $+0.589$ for logistic regression), and what little variation remains follows squared bias, the squared correlation between per-ND squared bias and variance being about $0.89$ ($r \approx 0.94$). This follows from the matched design. The data-generating node model is a decision tree, and the fitted node model is also a decision tree, so structures that poorly match the generating ND tend to have higher bias and higher instability. The decision-tree run therefore shows that the embedding still orders performance under a second node model, but that the depth organisation of variance is specific to the logistic setting.

\section{Simulating features}\label{appendix:feature_gen}

In the simulation settings, an informative feature is one that is used by the branch model at an internal node of the generating ND. In the 1000-feature sparse setting, each node uses 50 features. In dense settings, each node uses all available features, either 10 or 1000. Features not used by a node are noise for that node, even if they may be used by other nodes.

For each simulated dataset, we generate \(p\) features, with \(p=10\) or \(p=1000\) depending on the setting. For feature \(k\in\{1,\dots,p\}\), a distribution family \(D_k\) is first sampled from
\(\{\text{Normal},\text{Poisson},\text{Gamma},\text{Laplace},\text{Binomial}\}\)
with probabilities \((0.30,\,0.10,\,0.15,\,0.20,\,0.25)\). Conditional on \(D_k\), values are then drawn i.i.d.\ for all observations from that family using the parameterisation in our simulation code.

The feature-generation mechanism is held fixed within a simulation. Training features are resampled across repetitions, and the evaluation set is generated once and held fixed (see the \texttt{ndscape} repository \citep{ndscape} for the exact implementation and generation details).

\subsection{Ablation study}\label{appendix:ablation}

The main text (Section~\ref{sec:ablation}) reports ablations for the 1000-feature sparse setting. Here we repeat the same comparison for the 10-feature dense and 1000-feature dense settings, using the same 50 generators, embedding procedures, and evaluation metrics. The depth-weighted tokeniser has the highest mean in every row for both 6 and 7 classes, so its advantage over BoW and RF is not tied to the sparse-feature setting used in the main simulation.

\begin{table}[t]
\centering
\caption{Ablation across feature settings at 6 classes. Mean $\pm$ SD over 50 generators. Bold marks the row maximum. The 1000-feature sparse rows reproduce Table~\ref{tab:ablation-1000-sparse}}
\label{tab:ablation-appendix-6}
\small
\renewcommand{\arraystretch}{1.28}
\begin{tabular*}{\textwidth}{@{\extracolsep{\fill}}lccc@{}}
\toprule
Measure & Tokeniser & BoW & RF \\
\midrule
\multicolumn{4}{l}{\textit{10-feat (dense)}} \\
Moran $I$ (bias$^2$) & \textbf{0.288 $\pm$ 0.103} & 0.233 $\pm$ 0.119 & 0.201 $\pm$ 0.115 \\
Moran $I$ (var) & \textbf{0.323 $\pm$ 0.064} & 0.199 $\pm$ 0.066 & 0.103 $\pm$ 0.057 \\
Moran $I$ (SPL) & \textbf{0.288 $\pm$ 0.103} & 0.233 $\pm$ 0.119 & 0.201 $\pm$ 0.115 \\
AP@10 (bias$^2$) & \textbf{0.338 $\pm$ 0.107} & 0.274 $\pm$ 0.111 & 0.234 $\pm$ 0.079 \\
AP@10 (SPL) & \textbf{0.338 $\pm$ 0.107} & 0.274 $\pm$ 0.112 & 0.234 $\pm$ 0.079 \\
\midrule
\multicolumn{4}{l}{\textit{1000-feat (sparse)}} \\
Moran $I$ (bias$^2$) & \textbf{0.356 $\pm$ 0.155} & 0.256 $\pm$ 0.148 & 0.171 $\pm$ 0.163 \\
Moran $I$ (var) & \textbf{0.541 $\pm$ 0.044} & 0.271 $\pm$ 0.048 & 0.033 $\pm$ 0.019 \\
Moran $I$ (SPL) & \textbf{0.300 $\pm$ 0.138} & 0.248 $\pm$ 0.148 & 0.178 $\pm$ 0.168 \\
AP@10 (bias$^2$) & \textbf{0.372 $\pm$ 0.138} & 0.275 $\pm$ 0.119 & 0.265 $\pm$ 0.115 \\
AP@10 (SPL) & \textbf{0.379 $\pm$ 0.124} & 0.294 $\pm$ 0.119 & 0.273 $\pm$ 0.113 \\
\midrule
\multicolumn{4}{l}{\textit{1000-feat (dense)}} \\
Moran $I$ (bias$^2$) & \textbf{0.355 $\pm$ 0.149} & 0.256 $\pm$ 0.144 & 0.173 $\pm$ 0.158 \\
Moran $I$ (var) & \textbf{0.538 $\pm$ 0.042} & 0.270 $\pm$ 0.052 & 0.038 $\pm$ 0.018 \\
Moran $I$ (SPL) & \textbf{0.297 $\pm$ 0.135} & 0.247 $\pm$ 0.146 & 0.183 $\pm$ 0.166 \\
AP@10 (bias$^2$) & \textbf{0.368 $\pm$ 0.134} & 0.274 $\pm$ 0.117 & 0.267 $\pm$ 0.114 \\
AP@10 (SPL) & \textbf{0.371 $\pm$ 0.121} & 0.289 $\pm$ 0.119 & 0.277 $\pm$ 0.112 \\
\bottomrule
\end{tabular*}
\end{table}

\begin{table}[t]
\centering
\caption{Ablation across feature settings at 7 classes. Mean $\pm$ SD over 50 generators. Bold marks the row maximum. The 1000-feature sparse rows reproduce Table~\ref{tab:ablation-1000-sparse}}
\label{tab:ablation-appendix-7}
\small
\renewcommand{\arraystretch}{1.28}
\begin{tabular*}{\textwidth}{@{\extracolsep{\fill}}lccc@{}}
\toprule
Measure & Tokeniser & BoW & RF \\
\midrule
\multicolumn{4}{l}{\textit{10-feat (dense)}} \\
Moran $I$ (bias$^2$) & \textbf{0.393 $\pm$ 0.071} & 0.233 $\pm$ 0.084 & 0.229 $\pm$ 0.079 \\
Moran $I$ (var) & \textbf{0.413 $\pm$ 0.079} & 0.157 $\pm$ 0.032 & 0.188 $\pm$ 0.060 \\
Moran $I$ (SPL) & \textbf{0.392 $\pm$ 0.071} & 0.233 $\pm$ 0.084 & 0.229 $\pm$ 0.079 \\
AP@10 (bias$^2$) & \textbf{0.258 $\pm$ 0.103} & 0.225 $\pm$ 0.077 & 0.165 $\pm$ 0.042 \\
AP@10 (SPL) & \textbf{0.259 $\pm$ 0.103} & 0.225 $\pm$ 0.077 & 0.165 $\pm$ 0.042 \\
\midrule
\multicolumn{4}{l}{\textit{1000-feat (sparse)}} \\
Moran $I$ (bias$^2$) & \textbf{0.473 $\pm$ 0.119} & 0.256 $\pm$ 0.107 & 0.188 $\pm$ 0.120 \\
Moran $I$ (var) & \textbf{0.726 $\pm$ 0.048} & 0.214 $\pm$ 0.034 & 0.045 $\pm$ 0.018 \\
Moran $I$ (SPL) & \textbf{0.378 $\pm$ 0.103} & 0.263 $\pm$ 0.121 & 0.208 $\pm$ 0.121 \\
AP@10 (bias$^2$) & \textbf{0.297 $\pm$ 0.140} & 0.237 $\pm$ 0.078 & 0.179 $\pm$ 0.059 \\
AP@10 (SPL) & \textbf{0.297 $\pm$ 0.119} & 0.245 $\pm$ 0.081 & 0.191 $\pm$ 0.062 \\
\midrule
\multicolumn{4}{l}{\textit{1000-feat (dense)}} \\
Moran $I$ (bias$^2$) & \textbf{0.475 $\pm$ 0.115} & 0.256 $\pm$ 0.106 & 0.182 $\pm$ 0.113 \\
Moran $I$ (var) & \textbf{0.719 $\pm$ 0.048} & 0.214 $\pm$ 0.038 & 0.046 $\pm$ 0.015 \\
Moran $I$ (SPL) & \textbf{0.379 $\pm$ 0.106} & 0.263 $\pm$ 0.122 & 0.203 $\pm$ 0.116 \\
AP@10 (bias$^2$) & \textbf{0.298 $\pm$ 0.134} & 0.237 $\pm$ 0.079 & 0.177 $\pm$ 0.055 \\
AP@10 (SPL) & \textbf{0.299 $\pm$ 0.115} & 0.245 $\pm$ 0.082 & 0.187 $\pm$ 0.054 \\
\bottomrule
\end{tabular*}
\end{table}

\section{High-class-count embeddings}\label{appendix:highclass}

\begin{figure}[!htbp]\centering
\includegraphics[width=\linewidth]{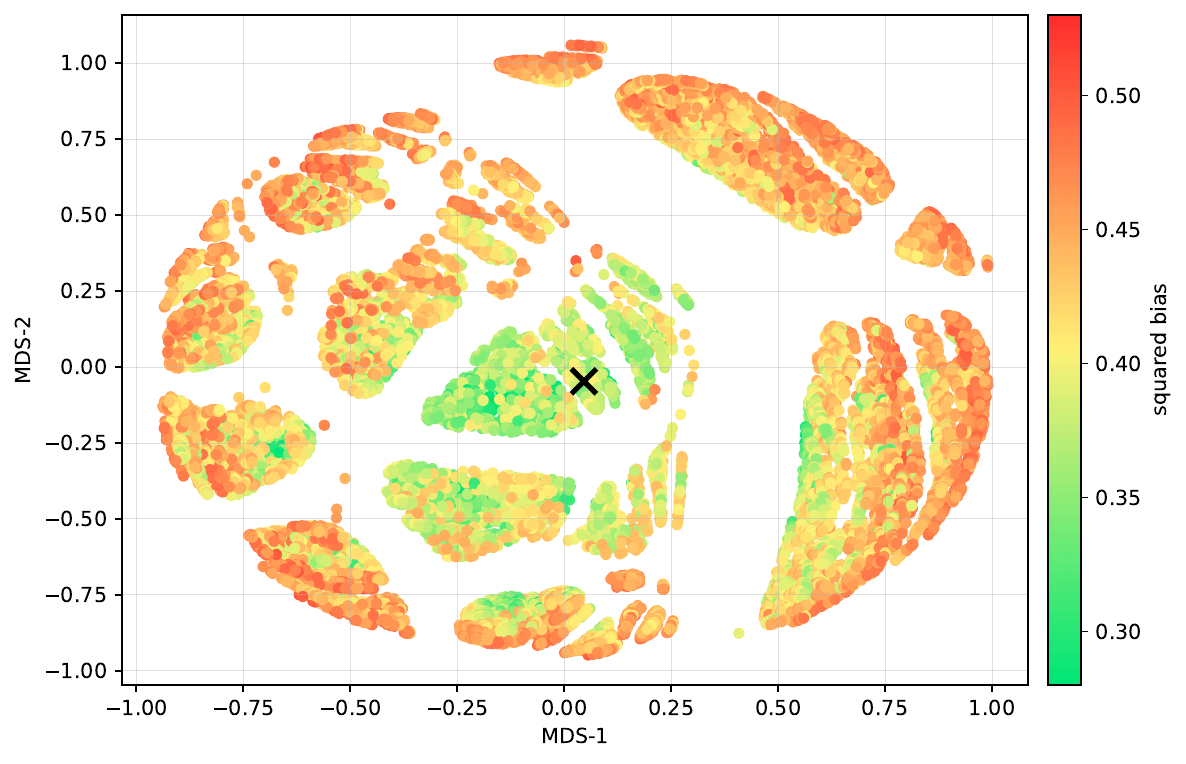}
\caption{Squared bias for 8-class ND embedding}
\label{fig:8-class-bias}
\end{figure}

\begin{figure}[!htbp]\centering
\includegraphics[width=\linewidth]{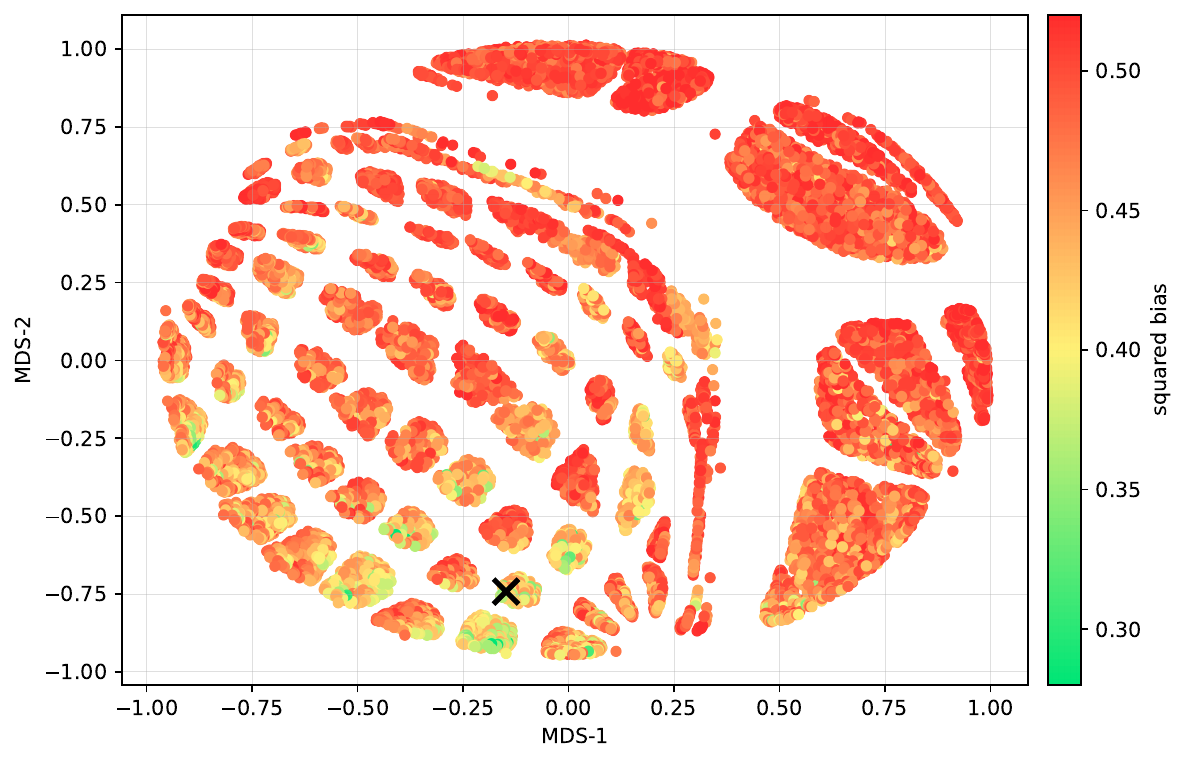}
\caption{Squared bias for 9-class ND embedding}
\label{fig:9-class-bias}
\end{figure}

\begin{figure}[!htbp]\centering
\includegraphics[width=\linewidth]{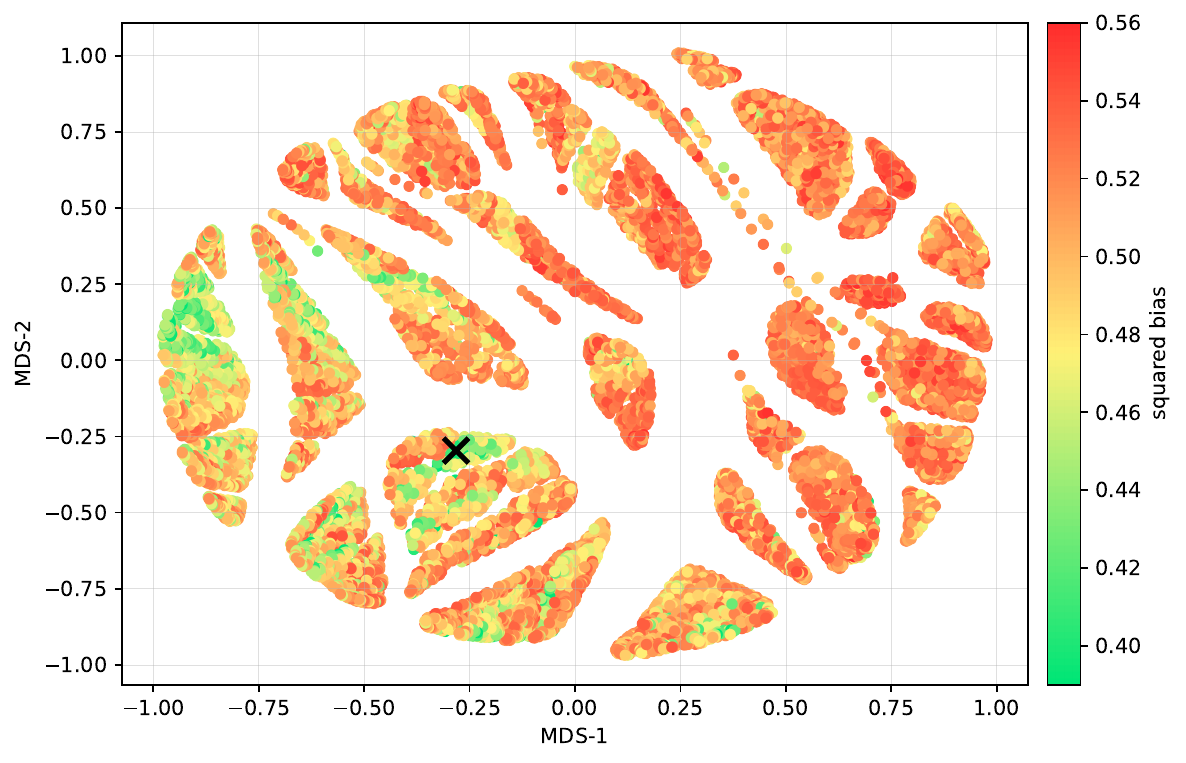}
\caption{Squared bias for 10-class ND embedding}
\label{fig:10-class-bias}
\end{figure}

\FloatBarrier
\clearpage
\section{Per-dataset real-data results}\label{appendix:realdata}

This appendix presents the four datasets not addressed in Section~\ref{sec:real-glass}. The cross-dataset statistics are in Tables~\ref{tab:realdata-spatial} and~\ref{tab:realdata-heuristic}.

\subsection{Steel Plates Faults (7 classes)} 

The dataset has 1{,}941 instances and 27 geometric and luminosity attributes across 7 fault types, and all 10{,}395 NDs are enumerated. Test accuracy ranges from $61.6\%$ to $74.1\%$ (mean $68.4\%$). BoK reaches the top half by cluster rank, the only heuristic on any dataset to do so, landing in a cluster of rank 4 of 10, and its tree at $72.0\%$, tied with NDC, is the best of the six. The other five heuristics are clustered and ranked 7 or worse (Figure~\ref{fig:steel_plates}).

\begin{figure}
    \centering
    \includegraphics[width=1\linewidth]{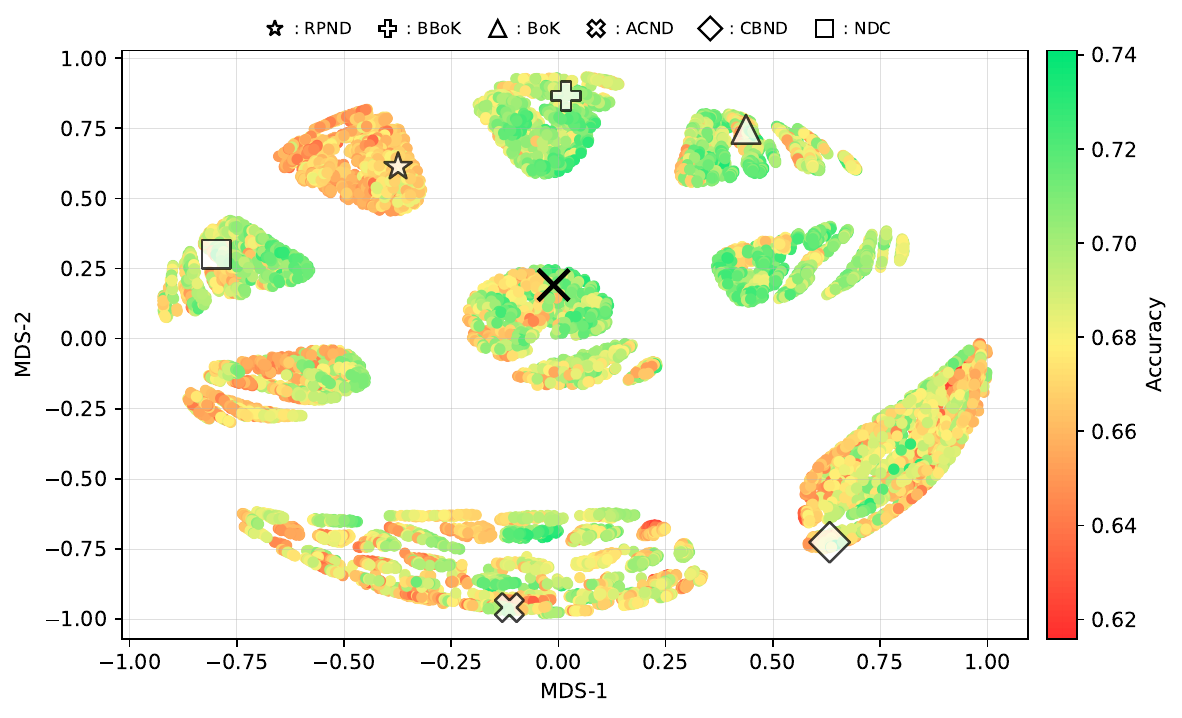}
    \caption{The ndscape embedding for the Steel Plates Faults dataset, coloured by held-out accuracy. Each point is an ND, positioned by MDS on the cosine distance between topology encodings. Marker accuracies are RPND (66.7\%), BoK (72.0\%), BBoK (71.4\%), ACND (71.0\%), CBND (68.6\%), and NDC (72.0\%). The solid cross marks the highest-accuracy ND (74.1\%)}
    \label{fig:steel_plates}
\end{figure}

\FloatBarrier

\subsection{Mice Protein (8 classes)} 

The dataset has 1{,}080 samples and 77 protein expression measurements across 8 classes, sampled to 15{,}000 structures. Test accuracy ranges from $84.9\%$ to $99.7\%$ (mean $96.4\%$). NDC selects the best individual tree at $98.8\%$, within $0.9$ points of the best evaluated tree, yet its cluster has the worst mean of all 19, at $94.1\%$. This is the clearest case of a near-best tree drawn from the weakest region (Figure~\ref{fig:mice_protein}).

\begin{figure}
    \centering
    \includegraphics[width=1\linewidth]{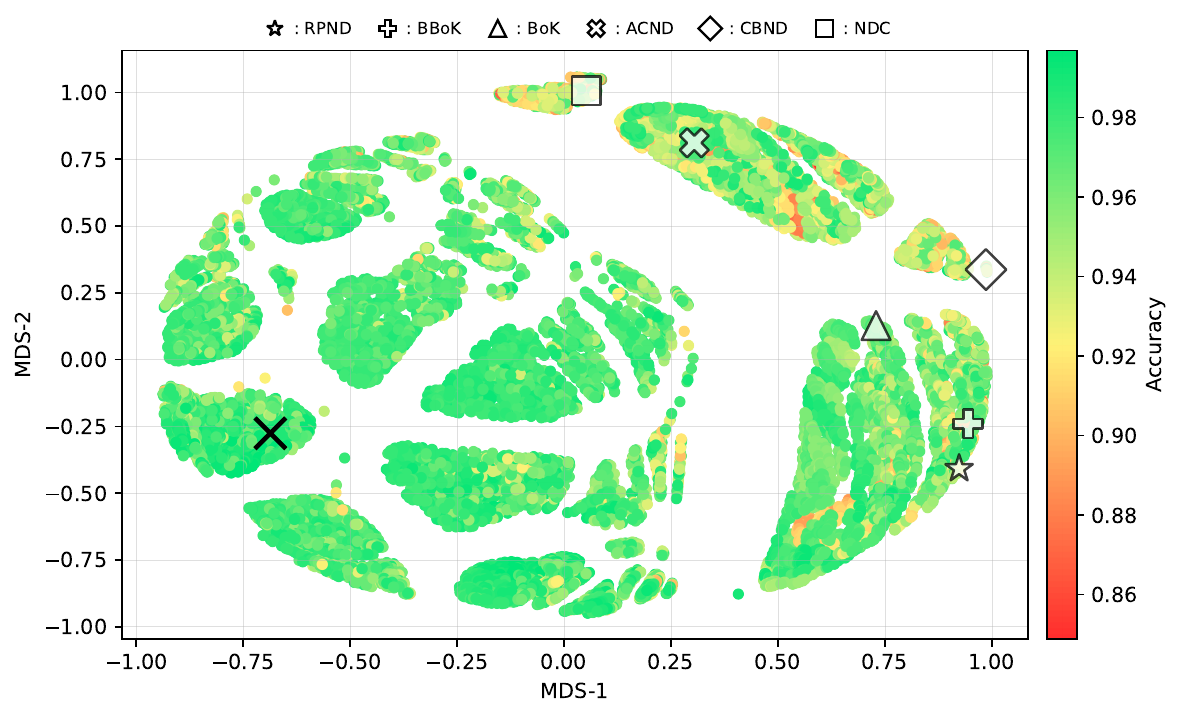}
    \caption{The ndscape embedding for the Mice Protein Expression dataset, coloured by held-out accuracy. Each point is an ND, positioned by MDS on the cosine distance between topology encodings. Marker accuracies are RPND (91.4\%), BoK (97.5\%), BBoK (96.6\%), ACND (94.4\%), CBND (93.5\%), and NDC (98.8\%). The solid cross marks the highest-accuracy ND (99.7\%)}
    \label{fig:mice_protein}
\end{figure}

\subsection{Urban Land Cover (9 classes)}\label{appendix:urban_land} 

The dataset has 675 samples and 147 spectral and morphological features across 9 land-cover classes, with the UCI train-test split and 15{,}000 sampled structures. Test accuracy ranges from $56.6\%$ to $77.3\%$ (mean $69.4\%$). The cluster structure is flat. The 12 clusters span only $3.0$ points from the best mean accuracy to the worst, so cluster-mean accuracy is a weak signal here, which matches the low Moran's $I$ values in Table~\ref{tab:realdata-spatial}. The best heuristic falls into the lowest-ranked cluster, rank 12 of 12 (Figure~\ref{fig:urban_land_cover}). NDC selects the best individual tree at $75.9\%$, within $1.4$ points of the best evaluated tree, yet it sits in the worst-mean cluster. ACND at $73.6\%$ shows the same pattern, a near-best tree drawn from a low-mean region.

\begin{figure}
    \centering
    \includegraphics[width=1\linewidth]{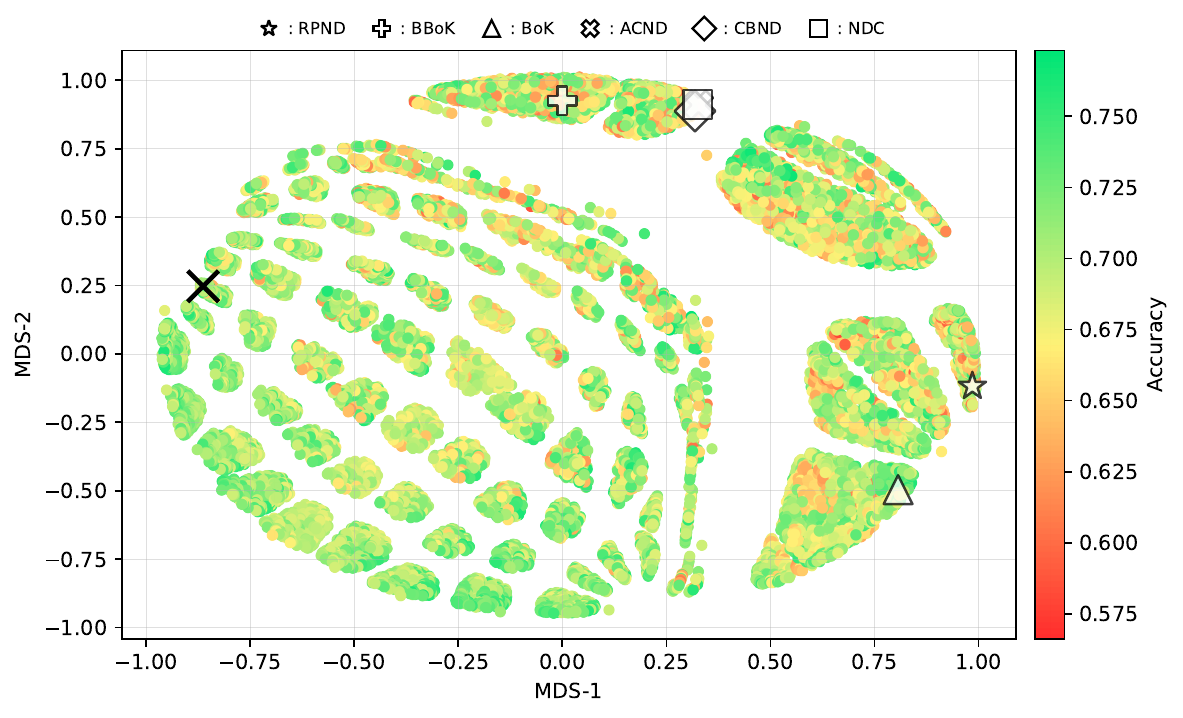}
    \caption{The ndscape embedding for the Urban Land Cover dataset, coloured by held-out accuracy. Each point is an ND, positioned by MDS on the cosine distance between topology encodings. Marker accuracies are RPND (62.9\%), BoK (74.4\%), BBoK (61.1\%), ACND (73.6\%), CBND (64.3\%), and NDC (75.9\%). The solid cross marks the highest-accuracy ND (77.3\%)}
    \label{fig:urban_land_cover}
\end{figure}

\subsection{Pen-Based Digits (10 classes)} 

The dataset has 10{,}992 instances and 16 pen-tip coordinate features across the digits 0 to 9, sampled to 15{,}000 of $3.4 \times 10^7$ structures. Test accuracy ranges from $71.0\%$ to $93.3\%$ (mean $85.9\%$). BoK comes closest to the best region, in a cluster of rank 4 of 17 with a tree at $92.6\%$. The variance autocorrelation is the lowest of the six datasets, which we attribute to the sparse sampling of the largest space (Figure~\ref{fig:pen_based_digits}).

\begin{figure}
    \centering
    \includegraphics[width=1\linewidth]{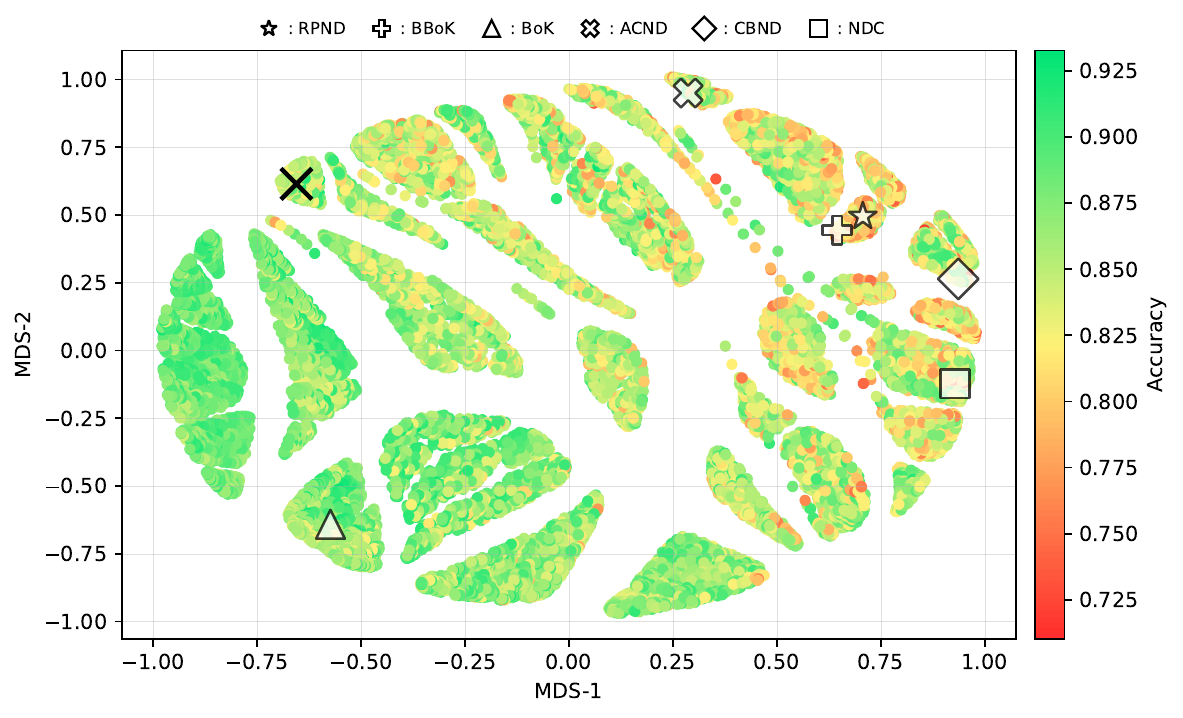}
    \caption{The ndscape embedding for the Pen-Based Recognition of Handwritten Digits dataset, coloured by held-out accuracy. Each point is an ND, positioned by MDS on the cosine distance between topology encodings. Marker accuracies are RPND (82.6\%), BoK (92.6\%), BBoK (90.8\%), ACND (88.8\%), CBND (83.9\%), and NDC (87.7\%). The solid cross marks the highest-accuracy ND (93.3\%)}
    \label{fig:pen_based_digits}
\end{figure}

\subsection{Soybean (19 classes)}\label{appendix:soybean}

The Soybean dataset has 307 samples and 35 features across 19 classes. Here $M - 2 = 17$, so we cap the maximum token depth at $d_{\max} = \min(8, M-2) = 8$. Splits below depth~8 lie in highly unbalanced subtrees and affect a single root-to-leaf path, so the cap bounds the token budget without altering the upper-level structure the map relies on. For every class count up to 10, $M - 2 \le 8$ and the cap has no effect.

We sampled 15{,}000 of \(2.2 \times 10^{20}\) possible structures. Test accuracy ranges from \(86.8\%\) to \(94.6\%\) (mean \(90.7\%\)). Despite the much larger structure space, held-out accuracy, prediction variance, and log loss remain spatially organised in the ndscape embedding (Table~\ref{tab:realdata-spatial}).

NDC selects the best heuristic tree at \(91.2\%\), within \(3.4\) points of the best evaluated tree. However, it lands in a cluster ranked 8 of 27 by mean accuracy. The region gap is \(0.4\) points. As in the other real-data examples, the best individual heuristic tree does not come from the best-performing region of the embedding (Figure~\ref{fig:soybean}).

\begin{figure}
    \centering
    \includegraphics[width=1\linewidth]{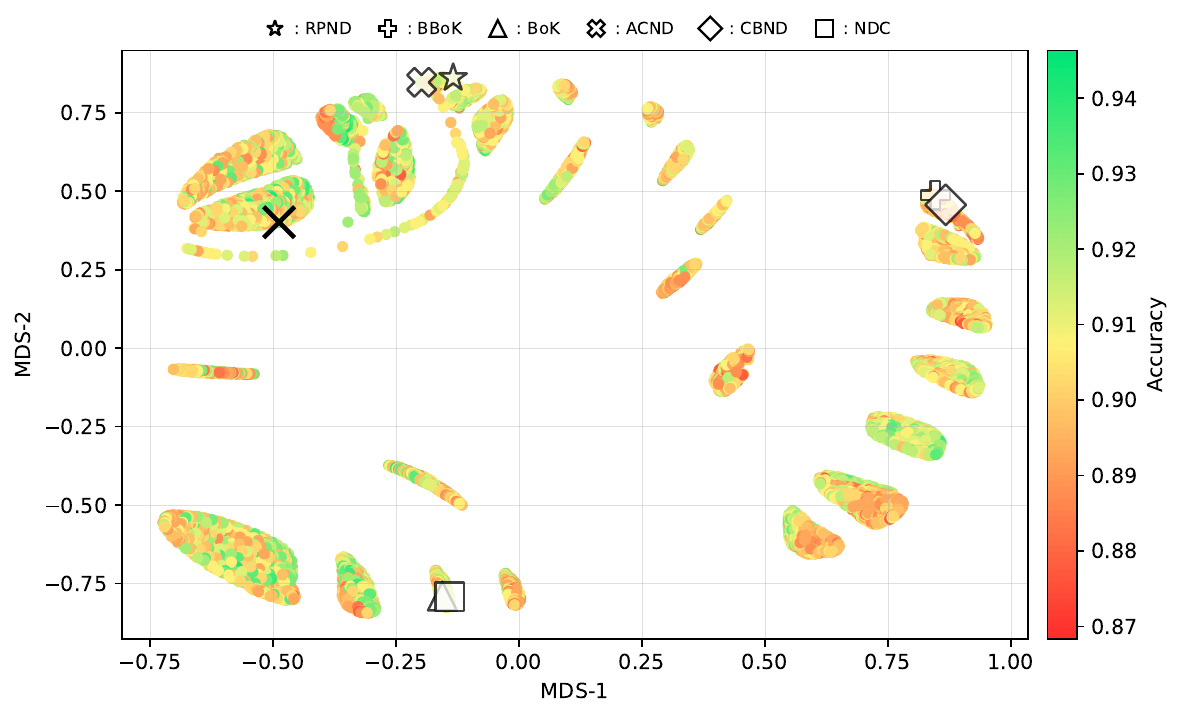}
    \caption{The ndscape embedding for the Soybean dataset, coloured by held-out accuracy. Each point is an ND, positioned by MDS on the cosine distance between topology encodings. Marker accuracies are RPND (\(90.7\%\)), BoK (\(90.2\%\)), BBoK (\(90.7\%\)), ACND (\(90.2\%\)), CBND (\(90.2\%\)), and NDC (\(91.2\%\)). The solid cross marks the highest-accuracy evaluated ND (\(94.6\%\))}
    \label{fig:soybean}
\end{figure}
\FloatBarrier

\end{document}